\def\endthebibliography{%
	\def\@noitemerr{\@latex@warning{Empty `thebibliography' environment}}%
	\endlist
}
\definecolor{mediumblue}{rgb}{0.0, 0.0, 0.8}
\definecolor{mediumcandyapplered}{rgb}{0.89, 0.02, 0.17}
\definecolor{nazar}{rgb}{0.7, 0.5, 0.9}
\let\NAT@parse\undefined
\definecolor{lightblue}{rgb}{0.30,0.75,0.93}
\newcommand{\greensquare}{\tikz\fill[green!20!white] (0,0) rectangle (2mm,2mm);}
\newcommand{\redsquare}{\tikz\fill[red!20!white] (0,0) rectangle (2mm,2mm);}
\DeclareRobustCommand\sampleline[1]{%
	\tikz\draw[#1] (0,0) (0,\the\dimexpr\fontdimen22\textfont2\relax)
	-- (2em,\the\dimexpr\fontdimen22\textfont2\relax);%
}
\newtheorem{theorem}{Theorem}
\newtheorem{lemma}{Lemma}
\newtheorem{problem}{Problem}
\newtheorem{definition}{Definition}
\newtheorem{remark}{Remark}
\newtcolorbox{resp}[1][]{%
	enhanced jigsaw,%
	colback=gray!2!white,%
	colframe=gray!80!black,%
	size=small,%
	boxrule=1pt,%
	halign title=flush center,%
	coltitle=black,%
	breakable,%
	drop shadow=black!50!white,%
	attach boxed title to top left={xshift=1cm,yshift=-\tcboxedtitleheight/2,yshifttext=-\tcboxedtitleheight/2},%
	minipage boxed title=3cm,%
	boxed title style={%
		colback=white,%
		size=fbox,%
		boxrule=1pt,%
		boxsep=2pt,%
		underlay={%
			\coordinate (dotA) at ($(interior.west) + (-0.5pt,0)$);
			\coordinate (dotB) at ($(interior.east) + (0.5pt,0)$);
			\begin{scope}[gray!80!black]
				\fill (dotA) circle (2pt);
				\fill (dotB) circle (2pt);
			\end{scope}
		}%
	},%
	#1%
}
\newcommand{\R}{{\mathbb{R}}}
\def\@opargbegintheorem#1#2#3{\textit{#1\ #2} \textit{(#3):}}
\newcommand\Xright{\overrightarrow{\mathbb{X}}}
\newcommand\Bxright{\overrightarrow{\mathcal{B}}(x)}
\newcommand\thetaTildeI{\left[\begin{array}{c} \mathds{I} \\\Omega^\top \end{array}\right]}
\def\BibTeX{{\rm B\kern-.05em{\sc i\kern-.025em b}\kern-.08em
		T\kern-.1667em\lower.7ex\hbox{E}\kern-.125emX}}
		\patchcmd{\@oddhead}{\\[-19pt]}{\\[-8pt]}{}{}%
		\patchcmd{\@evenhead}{\\[-19pt]}{\\[-8pt]}{}{}%
\begin{document}
	
\title{{Data-Efficient Control of Polynomial Systems via Physics-Guided Quadratic Constraints}}
 \author{MohammadHossein Ashoori, \IEEEmembership{Student Member,~IEEE}, Ali Aminzadeh, \IEEEmembership{Member,~IEEE}
 	\\ Amy Nejati, \IEEEmembership{Senior Member,~IEEE}, and Abolfazl Lavaei, \IEEEmembership{Senior Member,~IEEE}
 	\thanks{M.H. Ashoori, A. Nejati, and A. Lavaei are with the School of Computing, Newcastle University, NE4 5TG Newcastle Upon Tyne, United Kingdom (e-mails: {\tt\small{\{m.ashoori2,amy.nejati,abolfazl.lavaei\}@newcast\\le.ac.uk}}).}
 	 \thanks{A. Aminzadeh is with the Automation Technology and Mechanical Engineering Unit at the Faculty of Engineering and Natural Sciences, Tampere University, Finland (e-mail: {\tt\small{ali.aminzadeh@tuni.fi}}).}
}

\maketitle
\begin{abstract}
This work addresses the critical challenge of guaranteeing safety for complex dynamical systems where precise mathematical models are uncertain and data measurements are corrupted by noise. We develop a {\emph{physics-guided}}, direct data-driven framework for synthesizing \emph{robust} safety controllers for discrete-time nonlinear polynomial systems that are subject to \emph{unknown-but-bounded} disturbances. To do so, we introduce a notion of safety through \emph{robust} control barrier certificates, which ensure avoidance of unsafe regions, offering a less conservative alternative to existing methods based on robust invariant sets.  {To achieve data efficiency, we further integrate physical information, formulated as \emph{quadratic constraints} on system and control matrices,} with observed noisy data. This integration drastically reduces data requirements, enabling robust safety analysis with significantly \emph{shorter} trajectories compared to purely data-driven methods. The proposed synthesis procedure is formulated as a sum-of-squares optimization program that systematically designs the {barrier} and its associated {controller} by leveraging both collected data and underlying physical laws. The efficacy of our framework is demonstrated on three benchmark systems, confirming its ability to offer robust safety guarantees with \emph{reduced} data demands.
\end{abstract}

\begin{IEEEkeywords}
	Data-driven control, physics-guided methods, robust control barrier certificates, robust safety controllers, nonlinear polynomial systems, formal methods
\end{IEEEkeywords}

\section{Introduction}\label{Sec: Introduction}

\IEEEPARstart{S}{afety}-critical systems are embedded in nearly every aspect of modern life, ranging from self-driving vehicles and air traffic control systems to healthcare technologies. Failures in these systems can result in severe consequences, such as loss of life, environmental damage, or substantial financial losses~\cite{mcgregor2017analysis}. As these systems become increasingly complex, ensuring their safe operation necessitates control approaches that rigorously enforce safety constraints, preventing the system from entering unsafe regions despite internal complexity or external disturbances.

{\bf State of the Art.} In recent years, there has been growing interest in ensuring the safety of dynamical systems through \emph{control barrier certificates (CBCs)}, initially introduced in \cite{prajna2004safety,wieland2007constructive}. By imposing specific inequality constraints on a candidate barrier function and its derivative (or difference) along the system's dynamics, analogous to Lyapunov functions, CBCs ensure that trajectories remain within predefined safe regions. Accordingly,  the existence of a CBC provides a formal (probabilistic) certificate of safety. While CBCs have been widely used for formal safety verification and controller synthesis in complex systems, both deterministic \cite{ames2019control,santoyo2021barrier,xiao2023safe,wooding2024protect} and stochastic \cite{zhang2012safety,clark2019control,NEURIPS2020_barrier,lavaei2024scalable,nejati2024context,lavaei2022automated}, they typically rely on the availability of a \emph{precise} mathematical model of the system. This assumption is often restrictive, as real-world systems are invariably affected by parameter uncertainties, unmodeled dynamics, or external disturbances.

{While classical robust control methods (\emph{e.g.,} $\mu$-synthesis~\cite{zhou1996robust}) have widely been employed to handle parameter uncertainties by designing controllers resilient to predefined uncertainty sets, increasing attention has recently been devoted to \emph{data-driven} methods, which adapt established robust control tools to characterize uncertainty sets directly from collected data and available physical prior information, rather than relying solely on static \emph{a priori} model assumptions.}  Data-driven methods are generally classified into \emph{indirect} and \emph{direct} approaches \cite{wang2016indirect,dorfler2022bridging}. More precisely, \emph{indirect} methods follow a two-step procedure: first, they identify a system model from collected data, and second, they apply traditional \emph{model-based} control design to this identified model \cite{wang2016indirect}. This process can be fragile, as errors from the system identification stage can propagate and compromise the final safety guarantees. Furthermore, it can be computationally intensive, especially for complex systems where identifying a high-fidelity model is a challenge in itself. In contrast, \emph{direct} data-driven approaches offer a more streamlined alternative by bypassing explicit model construction and designing controllers directly from system measurements \cite{krishnan2021direct,de2019formulas,martin2023guarantees}. This approach mitigates the \emph{two-level} computational costs of indirect approaches and eliminates errors from model approximation, making them especially valuable for complex systems where detailed modeling is impractical or computational resources are constrained.  Nevertheless, ensuring safety solely from observed data without access to the underlying model remains a critical challenge.

In the realm of direct data-driven methods, the \emph{scenario approach}~\cite{calafiore2006scenario,campi2009scenario} has emerged as a widely used framework for robust control design. This methodology constructs solutions from data and subsequently relates them back to the original system via intermediate formulations that encode chance constraints~\cite{esfahani2014performance,margellos2014road}. Despite its strengths, the scenario approach relies on a key assumption: the data must consist of \emph{independent and identically distributed (i.i.d.)} samples. In practice, this implies that each sample should originate from a distinct, independent input-output trajectory~\cite{calafiore2006scenario}, often necessitating access to a \emph{large number} of independent trajectories. As a result, its applicability is mainly suited to \emph{simulator-based} environments, where such independent data collection is feasible.

  An alternative to the scenario approach in direct data-driven methods is the \emph{non-i.i.d.} trajectory-based framework. Instead of relying on multiple independent samples, this method relies on information from a \emph{single} input-state trajectory observed over a \emph{finite-horizon} experiment to perform control analysis~\cite{de2019formulas,van2020noisy,luppi2024data,monshizadeh2024meta,samari2024single}. Building on the concept of \emph{persistent excitation}, this approach requires the trajectory to satisfy a rank condition for specific system classes to sufficiently capture the system's behavior, {as established by Willems \textit{et al.'s} fundamental lemma \cite{willems2005note} and generalized to multiple trajectories in~\cite{van2020willems}. From the perspective of the behavioral framework, when a trajectory is persistently excited, it ensures that the range space of the data matrix spans the full behavior of the system. Consequently, the acquired data itself serves as the system model, facilitating direct control synthesis while bypassing the intermediate step of explicit parametric identification.} For the sake of fairness, it should be noted that while the scenario approach typically requires a large amount of i.i.d. data, it can handle a general class of nonlinear systems. In contrast, trajectory-based methods relying on persistence of excitation are currently applicable only to certain classes of nonlinear systems, such as those with polynomial dynamics.

While single-trajectory approaches are especially beneficial in settings where collecting multiple independent trajectories is impractical or infeasible, such methods typically rely on \emph{long-horizon} trajectories for complex systems to offer control analysis and design. This raises a key question: In the absence of an exact mathematical model, how can we leverage fundamental physical principles of the system’s dynamics to conduct formal safety analysis using substantially \emph{shorter trajectories}?

{\bf Central Contribution.} Motivated by this pivotal question, this paper develops {a direct data-driven framework for synthesizing robust control barrier certificates (R-CBCs) and their corresponding robust safety controllers (R-SCs) for nonlinear polynomial systems subject to unknown-but-bounded disturbances. Our primary contributions are as follows:
\begin{enumerate}[label=(\roman*)]
	\item We introduce a less conservative notion of safety through the proposed R-CBCs. In contrast to approaches based on robust control invariant (RCI) sets (\emph{e.g.,}~\cite{niknejad2023physics,luppi2024data}), which aim to render the entire safe set invariant, our framework only requires trajectories to originate from an initial set that is a subset of the safe set, while allowing them to evolve freely thereafter provided they never cross the safety threshold defined by the unsafe level set, thereby reducing conservativeness.
    \item To achieve data efficiency, we augment the purely data-driven synthesis with a \emph{physics-guided quadratic constraint}. This constraint utilizes approximate nominal parameter estimates, derived from first principles, alongside predefined uncertainty bounds. By integrating this physical prior, conceptually serving as side information \cite{ahmadi2023learning}, with noisy data from a single, short finite-horizon trajectory, we formulate an \emph{end-to-end} sum-of-squares (SOS) optimization program, drastically mitigating data requirements.
\end{enumerate}}

{\bf Existing Relevant Literature.} Foundational data-driven algebraic tools, such as Petersen's lemma \cite{petersen1987stabilization} and the S-procedure \cite{9308978}, have been widely utilized to handle noisy data \cite{bisoffi2022data,bisoffi2021trade,guo2024data}. While these studies primarily focus on asymptotic stabilization via intermediate geometric proxy objectives (\emph{e.g.,} explicitly minimizing uncertainty bounds), our synthesis framework provides formal safety guarantees via a direct, end-to-end SOS program.

Closely related to our work is \cite{luppi2024data}, which addresses data-driven safety for polynomial systems using RCI sets. Our R-CBC framework reduces the geometric conservatism of \cite{luppi2024data} by allowing the initial region to be a subset of the safe set, rather than requiring strict invariance of the entire safe set. It also enables direct synthesis via a single-shot SOS optimization, which avoids the potentially conservative overapproximation of system and control matrix sets. Moreover, it reduces the data requirement by incorporating prior physical knowledge of the system. It is worth emphasizing that the approach in~\cite{luppi2024data} renders more general safe sets, defined by polynomial inequalities, robustly invariant, whereas our framework restricts the R-CBC to a quadratic form. Moreover, the safety guarantees in~\cite{luppi2024data} for continuous-time systems are established over an infinite time horizon, while the continuous-time counterpart of our approach provides safety guarantees over a finite time horizon.

Regarding the incorporation of inaccurate physical knowledge, our physics-guided quadratic constraint is conceptually similar to the quadratic constraint relating the target and source systems in \cite{li2023data}, which was employed for transfer stabilization. In addition to adapting this mathematical formulation to improve data efficiency in safety certification, we further extend the framework to nonlinear polynomial systems.

Finally, our framework addresses the practical limitations of prior data-driven verification and synthesis methods. While scenario optimization techniques \cite{nejati2023formal} and their physics-guided extensions \cite{aminzadeh2024physics} provide safety verification, they rely on data-heavy i.i.d. sampling (often requiring thousands of trajectories) and do not perform controller synthesis. Conversely, while early trajectory-based SOS methods \cite{samari2024single} synthesize controllers from a single trajectory, they assume idealized settings without unknown disturbances, noisy data, or physical prior information for improving data efficiency. Furthermore, existing trajectory-based physics-guided approaches \cite{niknejad2023physics} remain strictly limited to linear systems. Other model-based robust approaches \cite{takano2018application, xu2015robustness, cosner2023robust, jankovic2018robust, kang2023verification} or imitation learning methods \cite{lindemann2024learning} similarly contrast with our data-efficient formulation.

{\bf Organization.} The rest of the paper is structured as follows. Section \ref{Discrete_Problem_Description} is dedicated to describe discrete-time nonlinear polynomial systems, including mathematical notations, formal definitions of the system and the corresponding R-CBC. Building on this foundation, Section \ref{Physics-Guided} presents our physics-guided data-driven framework, which is designed to handle systems with unknown-but-bounded disturbances. Within this framework, Section \ref{R-CBC_Design} details the core contribution of our work: a systematic method for jointly synthesizing the R-CBC and its corresponding R-SC by leveraging both collected data and underlying physical laws. To verify the efficacy of our work, Section \ref{Sec_Case_studies} provides three nonlinear polynomial case studies, and finally, Section \ref{Sec_Conclusion} concludes the paper.

\section{Problem Description}\label{Discrete_Problem_Description}

\subsection{Notation}

We denote the set of real numbers by $\mathbb{R}$, while $\mathbb{R}_{\geq 0}$ and $\mathbb{R}_{>0}$ represent the sets of non-negative and positive real numbers, respectively. The sets of non-negative and positive integers are denoted by $\mathbb{N} = \{0,1,2,\dots\}$ and $\mathbb{N}^+ = \{1,2,\dots\}$, respectively. The notation $\mathbb{R}^n$ represents an $n$-dimensional Euclidean space, whereas $\mathbb{R}^{n \times m}$ denotes the space of real matrices with $n$ rows and $m$ columns.  A vector with the components $x_i \in \mathbb{R}$ is denoted by $x = [x_1;  \dots ;x_N]$. Given $N$ vectors $x_i \in \R^n$, the corresponding matrix comprising these vectors is expressed as $x=[x_1 \, \, \ldots \,\, x_N] \in \R^{n \times N}$. We denote by $ \Vert \cdot \Vert_2 $ the \emph{spectral} norm of a matrix or the \emph{Euclidean} norm of a vector. The identity matrix of size $n \times n$ is expressed as $\mathds{I}_n$, while $\mathds{I}$ denotes an identity matrix with an appropriate dimension. Additionally, the symbol $\boldsymbol{0}_n \in \mathbb{R}^n$  represents a vector with all zero components.
A \emph{symmetric} matrix \( \mathcal{A} \) is denoted positive definite by \( \mathcal{A} \succ 0 \) and positive semi-definite by \( \mathcal{A} \succeq 0 \). The transpose of \( \mathcal{A} \) is represented as \( \mathcal{A}^\top \). Additionally, in a \emph{symmetric} matrix, $*$ represents the transposed entry corresponding to its symmetric counterpart. The minimum and maximum eigenvalues of a square matrix $\mathcal{A}$ are denoted by $\lambda_{\text{min}}(\mathcal{A})$ and $\lambda_{\text{max}}(\mathcal{A})$, respectively. 

\subsection{Discrete-Time Nonlinear Polynomial Systems}

We begin with discrete-time input-affine nonlinear polynomial systems, as formalized in the following definition.

\begin{definition}[\textbf{dt-IANPS}]\label{Def_1}
	A discrete-time input-affine nonlinear polynomial system (dt-IANPS) is defined as:
	\begin{equation}\label{Eq_BBox}
		\Sigma\!: x(k+1) = f(x(k)) + g(x(k))u(k) + \omega(k),  \quad k \in \mathbb{N},
	\end{equation}
	where \(x \in X\) represents the state, \(u \in U\) is the control input, and \(\omega \in W\) is the \emph{unknown-but-bounded} disturbance. The {known} sets \(X, W \subseteq \mathbb{R}^{n}\), and \(U \subseteq \mathbb{R}^{l}\) correspond to the state, disturbance, and input sets, respectively. {Moreover, \(f:X\to X\), with \(f(\boldsymbol{0}_n)=\boldsymbol{0}_n\), and \(g:X\to\mathbb{R}^{n\times l}\)
are polynomial maps.}
\end{definition}
The dynamics of (\ref{Eq_BBox}) can be expressed equivalently as
\begin{equation}\label{Eq_Llike}
	\Sigma\!:  x(k+1) = A \mathcal{M}(x(k)) + B{\mathcal{Q}}(x(k))u(k) + \omega(k),
\end{equation}
where $A \in \mathbb{R}^{n\times m}, 
B \in \mathbb{R}^{n\times q}$ are system and control matrices, while $\mathcal{M}(x)\in \mathbb{R}^{m}$, with $\mathcal{M}(\boldsymbol{0}_n) = \boldsymbol{0}_m$, and ${\mathcal{Q}}(x) \in \mathbb{R}^{q\times l}$ are a vector and a matrix of monomials in the components of the state vector $x$, respectively. 

We denote by $x_{x_0 u \omega}(k)$ the \emph{state trajectory} of $\Sigma$ at time $k \in \mathbb{N}$, under the input and disturbance signals $u(\cdot)$ and $\omega(\cdot)$, starting from an initial condition $x_0 = x(0)$.

In this work, both matrices $A$ and $B$ are considered \emph{unknown}, while an \emph{extended dictionary}~\cite{de2023data} (\emph{i.e.,} a library or family of functions) for $\mathcal{M}(x)$ and $\mathcal{Q}(x)$ is assumed to be available, encompassing a sufficiently rich set of terms to represent the true system dynamics, albeit with the inclusion of some superfluous terms. Since $\mathcal{M}(\boldsymbol{0}_n) = \boldsymbol{0}_m$, without loss of generality, one can find a polynomial matrix $\mathcal{C}(x) \in \mathbb{R}^{m \times n}$, where
\begin{align}\label{transform}
	\mathcal{M}(x) = \mathcal{C}(x) x.
\end{align}
This transformation facilitates expressing our conditions in terms of $\mathcal{C}(x)$ (cf. \eqref{Eq_CBC3_theorem}), simplifying the computational complexity. We also assume the disturbance $\omega$ is unknown but bounded.

 \begin{remark} [\textbf{On \(f(\boldsymbol{0}_n)=\boldsymbol{0}_n\)}]\label{Remark: origin_equilibrium}
   In a data-driven setting where the exact system dynamics are unknown, the structural assumption $f(\boldsymbol{0}_n) = \boldsymbol{0}_n$  can be empirically assessed prior to controller synthesis. Specifically, by initializing the system at the origin and applying zero control input, one can examine whether the resulting state trajectory remains confined to a small neighborhood around the origin, exhibiting only bounded fluctuations induced by the disturbance $\omega$. Such behavior provides practical evidence supporting the assumption that $f(\boldsymbol{0}_n) = \boldsymbol{0}_n$. This assumption implies that the origin is a natural equilibrium point for the underlying system in the absence of control inputs and external disturbances.
\end{remark}

\begin{remark}[\textbf{Dictionary for $\mathcal{M}(x)$ and $\mathcal{Q}(x)$}]\label{Remark: dictionary}
	Employing a \emph{rich} dictionary for $\mathcal{M}(x)$ and $\mathcal{Q}(x)$ is typically not a limiting factor. In numerous real-world applications, especially in electrical and mechanical engineering, the governing dynamics are frequently determined by underlying physical laws and first-principle modeling (cf. Section~\ref{Subsec_PIex}). While such laws dictate the structural expressions of $\mathcal{M}(x)$ and $\mathcal{Q}(x)$, the exact values of system parameters are often unknown (\emph{i.e.,} $A$ and $B$), which is consistent with our assumption that $A$ and $B$ are entirely unavailable. In addition, knowing the maximum degree of $\mathcal{M}(x)$ allows one to enumerate all admissible monomial combinations up to that degree (see benchmark case studies). 
\end{remark}

As this work focuses on developing a robust safety certificate for the unknown dt-IANPS in~\eqref{Eq_Llike}, the next subsection provides a formal definition of robust CBCs.

\subsection{Robust Control Barrier Certificates}

\begin{definition}[\textbf{R-CBC}] \label{R-CBC def}
	Consider a dt-IANPS $\Sigma$, with $X_{\mathbf i},X_{\mathbf u} \subseteq X $ being its \emph{initial} and \emph{unsafe} sets, respectively. A function $\mathcal{B}: X \to \mathbb{R}_{\geq 0}$ is called a robust control barrier certificate (R-CBC) for $\Sigma$ over a time horizon $[0,\mathcal T)$ if there exist $\gamma_{\mathbf i}, \gamma_{\mathbf u}  \in \mathbb{R}_{>0}$, with $\gamma_\mathbf i < \gamma_\mathbf u$, $\delta \in \mathbb{R}_{> 0},$ and $\lambda \in (0,1)$, such that
	\begin{subequations}\label{Eq_CBC}
		\begin{align}
			\mathcal{B}(x) &\leq \gamma_{\mathbf i}, & \forall x &\in X_{\mathbf i}, \label{Eq_CBC1} \\
			\mathcal{B}(x) &\geq \gamma_{\mathbf u}, & \forall x &\in X_{\mathbf u}, \label{Eq_CBC2} \end{align}
		and ${\forall x \in X, \;}\exists u \in U $ such that $\forall \omega \in W$,
		\begin{align}
			\mathcal{B}\big(A \mathcal{M}(x) + B{\mathcal{Q}}(x)u + \omega\big) \leq \lambda\mathcal{B}(x) + \delta,\label{Eq_CBC3}
		\end{align}
	\end{subequations}
	with (potentially small) $\delta$ satisfying
	\begin{align}\label{New10}
		\delta <  (\gamma_\mathbf u -\lambda^\mathcal{T}\gamma_\mathbf i ) \dfrac{1-\lambda}{1-\lambda^\mathcal{T}}.
	\end{align}
\end{definition}

As shown in Definition~\ref{R-CBC def}, the R-CBC imposes two conditions on the initial and unsafe level sets of the barrier function (\emph{i.e.,} \eqref{Eq_CBC1}–\eqref{Eq_CBC2}), and one condition along the system dynamics (\emph{i.e.,} \eqref{Eq_CBC3}). If there exists a level set of the barrier function that successfully isolates the unsafe set from all possible trajectories starting within the specified initial set, then this function serves as a certificate of the system's safety. It is important to note that the parameter $\delta$ in \eqref{Eq_CBC3} quantifies the level of robustness with respect to the unknown-but-bounded disturbance $\omega$ (cf.~\eqref{Eq_noise_limit}).

To illustrate the efficacy of the R-CBC in guaranteeing the robust safety of dt-IANPS in both \emph{infinite and finite} time horizons, we introduce the following theorem as the first contribution of our work.

\begin{theorem}[\textbf{Safety Guarantee for dt-IANPS}]\label{Lemma imp}
	Consider a dt-IANPS with an R-CBC $\mathcal{B}$ and the control input $u$ that jointly satisfy conditions \eqref{Eq_CBC}.
	\begin{itemize}
		\item If 
		\begin{subequations}
			\begin{align}\label{New9}
				\delta \leq \gamma_\mathbf u(1 - \lambda),
			\end{align}
			then for any initial state $x_0 \in X_\mathbf i$, all system trajectories remain outside the unsafe region $X_\mathbf u$ for all time (\emph{i.e.,} an \emph{infinite} time horizon), implying $x_{x_0 u \omega}(k) \notin X_\mathbf u$.
			\item If 
			\begin{align}\label{new8}
				\gamma_\mathbf u(1 - \lambda) < \delta \leq (\gamma_\mathbf u -\lambda^\mathcal{T}\gamma_\mathbf i) \dfrac{1-\lambda}{1-\lambda^\mathcal{T}},
			\end{align}  
		\end{subequations}
		then all system trajectories avoid $X_\mathbf u$ within the \emph{finite} time horizon $\mathcal{T}$. 
	\end{itemize}
\end{theorem}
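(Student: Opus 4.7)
The plan is to prove both items by an induction argument on the evolution of the barrier value $\mathcal{B}_\mathtt{d}(x(k))$, leveraging the defining contraction inequality \eqref{Eq_CBC3} together with the level-set conditions \eqref{Eq_CBC1}-\eqref{Eq_CBC2}. The core observation is the contrapositive of \eqref{Eq_CBC2}: if $\mathcal{B}_\mathtt{d}(x(k)) < \gamma_\mathbf{u}$, then $x(k) \notin X_\mathbf{u}$. Hence in both parts it suffices to show that, under the R-SC induced by Definition~\ref{R-CBC def}, the trajectory stays in the sub-level set $\tilde{X}$ for the relevant horizon.

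For the \emph{infinite-horizon} case, I would first note that \eqref{Eq_CBC1} gives $\mathcal{B}_\mathtt{d}(x_0) \leq \gamma_\mathbf{i} < \gamma_\mathbf{u}$, so $x_0 \in \tilde{X}$; this is the base of the induction. Assuming $\mathcal{B}_\mathtt{d}(x(k)) < \gamma_\mathbf{u}$, apply \eqref{Eq_CBC3} with the feedback $u$ guaranteed by Definition~\ref{R-CBC def} and use the hypothesis $\delta \leq \gamma_\mathbf{u}(1-\lambda)$ to obtain
\begin{equation*}
\mathcal{B}_\mathtt{d}(x(k+1)) \leq \lambda\mathcal{B}_\mathtt{d}(x(k)) + \delta < \lambda\gamma_\mathbf{u} + (1-\lambda)\gamma_\mathbf{u} = \gamma_\mathbf{u},
\end{equation*}
which closes the induction and yields $x_{x_0 u \omega}(k) \notin X_\mathbf{u}$ for all $k \in \mathbb{N}$.

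For the \emph{finite-horizon} case, I would iterate \eqref{Eq_CBC3} along the trajectory to derive the closed-form bound
\begin{equation*}
\mathcal{B}_\mathtt{d}(x(k)) \leq \lambda^{k}\mathcal{B}_\mathtt{d}(x_0) + \delta\,\frac{1-\lambda^{k}}{1-\lambda} \leq \lambda^{k}\gamma_\mathbf{i} + \delta\,\frac{1-\lambda^{k}}{1-\lambda},
\end{equation*}
valid as long as the trajectory remains in $\tilde{X}$. The requirement $\mathcal{B}_\mathtt{d}(x(k)) < \gamma_\mathbf{u}$ is then equivalent to $\delta < (\gamma_\mathbf{u} - \lambda^{k}\gamma_\mathbf{i})\frac{1-\lambda}{1-\lambda^{k}} =: h(k)$. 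The assumption of part (ii) reads $\delta \leq h(\mathcal{T})$ (with strict inequality secured by \eqref{New10}). I would then verify that $h$ is monotonically decreasing in $k$—most cleanly by the substitution $y = \lambda^{k} \in (0,1)$ and a direct derivative check, showing $h(k) \searrow \gamma_\mathbf{u}(1-\lambda)$ as $k \to \infty$—so that $\delta < h(\mathcal{T}) \leq h(k)$ for every $k \in \{0,1,\dots,\mathcal{T}\}$. Combining this with the same inductive loop as in part (i) certifies $x_{x_0 u \omega}(k) \notin X_\mathbf{u}$ throughout $[0,\mathcal{T})$.

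The only mildly delicate point is bookkeeping the \emph{strict} versus \emph{non-strict} inequalities so that the induction does not stall at the boundary: this is where the strict bound \eqref{New10} on $\delta$ from Definition~\ref{R-CBC def} is essential, together with the strict separation $\gamma_\mathbf{i} < \gamma_\mathbf{u}$, since these guarantee that every application of \eqref{Eq_CBC3} keeps the trajectory strictly inside $\tilde{X}$ and hence preserves the preconditions of the next inductive step. Verifying the monotonicity of $h(k)$ is the longest technical piece, but it is routine calculus rather than a substantive obstacle.
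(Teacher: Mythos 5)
Your proof is correct and follows the same overall strategy as the paper: an induction on membership in the sub-level set $\tilde{X}$ for the infinite-horizon case, and an unrolling of the recursion \eqref{Eq_CBC3} for the finite-horizon case. The one place where you go beyond the paper is the finite-horizon part. The paper simply iterates \eqref{Eq_CBC3} for $\mathcal{T}$ steps and checks the resulting bound only at $k=\mathcal{T}$, whereas you additionally verify that the threshold $h(k)=(\gamma_\mathbf{u}-\lambda^{k}\gamma_\mathbf{i})\frac{1-\lambda}{1-\lambda^{k}}$ is monotonically decreasing in $k$ (down to the limit $\gamma_\mathbf{u}(1-\lambda)$), so that $\delta<h(\mathcal{T})\leq h(k)$ forces $\mathcal{B}_\mathtt{d}(x(k))<\gamma_\mathbf{u}$ at every intermediate step $k\in\{0,\dots,\mathcal{T}\}$. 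This extra check is not cosmetic: condition \eqref{Eq_CBC3} is only assumed to hold for $x\in\tilde{X}$, so the recursion can legitimately be applied at step $k$ only after $x(k)$ has been shown to lie in $\tilde{X}$. The paper's proof leaves this implicit; your monotonicity argument (a one-line derivative computation in the variable $y=\lambda^{k}$, using $\gamma_\mathbf{i}<\gamma_\mathbf{u}$) closes that gap cleanly, and also establishes along the way that the trajectory avoids $X_\mathbf{u}$ at every time in $[0,\mathcal{T})$ rather than only at the terminal step. Your bookkeeping of strict versus non-strict inequalities via \eqref{New10} likewise matches the role that strict bound plays in the final strict inequality of the paper's derivation.
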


\begin{proof}
		The proof consists of two parts:
		\begin{itemize}
			\item First, we analyze the case where $\delta \leq \gamma_\mathbf u(1 - \lambda)$, which provides \emph{infinite} time horizon guarantees. According to \eqref{Eq_CBC1}, the initial state satisfies $\mathcal{B}(x(0))\leq \gamma_\mathbf i <\gamma_\mathbf u$.  We now  show that if $\mathcal{B}(x(k))< \gamma_\mathbf u$, then $\mathcal{B}(x(k+1))< \gamma_\mathbf u$. Given that $\delta \leq \gamma_\mathbf u(1 - \lambda) $ as per \eqref{New9}, and in accordance with \eqref{Eq_CBC3}, one has $\mathcal{B}(x(k+1))< \lambda\gamma_\mathbf u + \gamma_\mathbf u(1 - \lambda) = \gamma_\mathbf u$. Hence, according to \eqref{Eq_CBC2}, it follows that $x(k+1) \notin X_\mathbf u$. Therefore, all system trajectories will remain outside $X_\mathbf u$ within \emph{infinite} time horizons. It is also clear that if $\mathcal{T} \to \infty$ in \eqref{New10}, the condition in \eqref{New9} is recovered. Furthermore, since $\gamma_\mathbf u > 0$ and $\lambda \in (0,1)$, the inequality in \eqref{New9} always holds true for $\delta=0$, which implies the infinite-horizon guarantee.
			\item Now let us consider the case where $\delta > \gamma_\mathbf u(1 - \lambda) $, which offers safety guarantees for \emph{finite} time horizons. Starting from an initial state $x(0) \in X_\mathbf i$, according to~\eqref{Eq_CBC3}, after $\mathcal{T}$ time steps, one has\vspace{0.2cm}
			\begin{align*}
					\mathcal{B}(x(\mathcal{T}))&\leq\lambda\mathcal{B}(x(\mathcal{T}-1)) + \delta\\
					&\leq\lambda(\lambda\mathcal{B}(x(\mathcal{T}-2)) + \delta) + \delta
			\end{align*}
			\begin{align*}
				&~\vdots\\
				&\leq\lambda^{\mathcal{T}}\mathcal{B}(x(0)) + \delta(1+\dots+\lambda^{\mathcal{T}-1}))\\
				&=\lambda^{\mathcal{T}}\mathcal{B}(x(0)) + \delta \dfrac{1-\lambda^\mathcal{\mathcal{T}}}{1-\lambda}\\
				&\stackrel{\eqref{Eq_CBC1}}{\leq} \lambda^{\mathcal{T}}\gamma_\mathbf i + \delta \dfrac{1-\lambda^\mathcal{T}}{1-\lambda}\\			  &\stackrel{\eqref{new8}}{<}\lambda^{\mathcal{T}}\gamma_\mathbf i + (\gamma_\mathbf u -\lambda^\mathcal{T}\gamma_\mathbf i) (\dfrac{1-\lambda}{1-\lambda^\mathcal{T}})(\dfrac{1-\lambda^\mathcal{T}}{1-\lambda})\\
				&= \gamma_\mathbf u,
			\end{align*}
			implying $x(\mathcal{T}) \notin X_\mathbf u.$ Hence, one can conclude that all system trajectories will remain outside $X_\mathbf u$ within the \emph{finite} time horizon $\mathcal{T}$ satisfying the upper bound in \eqref{new8}, which completes the proof.
		\end{itemize}
\end{proof}

\begin{remark}[\textbf{Safety Guarantee Horizon \& R-CBC Novelty}]
~In traditional model-based setups, standard CBC conditions cannot ensure \emph{infinite-time} safety under persistent worst-case disturbances. Incorporating the decay rate $\lambda$ in \eqref{Eq_CBC3} resolves this limitation by establishing a mathematical bound for infinite-time safety, while also yielding a corresponding closed-form bound for finite-time safety guarantees. The achievable time horizon, as established in Theorem~\ref{Lemma imp}, depends on the interplay between $\delta, \lambda$, and the level sets $\gamma_\mathbf{i}, \gamma_\mathbf{u}$. Specifically, selecting a smaller $\lambda$ or operating under a lower disturbance bound (yielding a smaller $\delta$) increases the likelihood of achieving an infinite-horizon guarantee, albeit making condition \eqref{Eq_CBC3} more restrictive.
\end{remark}

While the R-CBC defined in Definition~\ref{R-CBC def} effectively ensures robust safety over both infinite and finite time horizons, its synthesis is computationally infeasible owing to the unknown system dynamics embedded in the left-hand side of \eqref{Eq_CBC3} (\emph{i.e.,} $\mathcal{B}(A \mathcal{M}(x) + B{\mathcal{Q}}(x)u + \omega) $). Although some recent efforts have explored data-driven approaches to address this issue, they often require extensive data over a horizon, which can be costly to acquire. Motivated by this key challenge, we formally define the \emph{physics-guided} data-driven problem that forms the focus of this study.

\begin{resp}
	\begin{problem} \label{P_safety}
		Consider a dt-IANPS in \eqref{Eq_Llike} with unknown matrices $A,B$, and unknown-but-bounded disturbance $\omega$. Develop a \emph{physics-guided} data-driven approach by collecting input-state data from dt-IANPS to design a robust controller that ensures the system's robust safety, while utilizing fundamental physical laws to \emph{mitigate the data} required for safety analysis.
	\end{problem}
\end{resp}

To address Problem \ref{P_safety}, we present our physics-guided data-driven approach for the discrete-time setting in the next section.

\section{Data-Conformity and Physics-Guided Sets}\label{Physics-Guided}

To synthesize an R-CBC for a dt-IANPS under limited data, we first define data-conformity sets that capture consistency with the observed trajectory. We then introduce a physics-guided set that leverages approximate models derived from \emph{fundamental physical laws} to reduce the data requirement. This integration significantly reduces reliance on large datasets while preserving rigorous safety constraints.

\subsection{Data-Conformity (DC) Set}

In our data-driven framework, data is collected from an experiment on \eqref{Eq_Llike} in the presence of \emph{unknown-but-bounded} disturbances. Starting from a given initial condition, we apply a sequence of arbitrary control inputs and record the corresponding state transitions produced by \eqref{Eq_Llike} over time steps $k = 1,2,\dots,T$, with $T \in \mathbb{N}^+$ being the total number of observed samples:
\begin{subequations}\label{Eq_ST}
	\begin{align}
		\Xright &= [ x(1) \quad x(2) \quad\dots\quad x(T)],\label{Eq_STx} \\
		\mathbb{X}&= [ x(0) \quad x(1) \quad\dots \quad x(T-1)],\label{Eq_STa} \\
		\mathbb{U} &= [ u(0) \quad u(1) \quad\dots\quad u(T-1)],\label{Eq_STb} \\
		\mathbb{W} &= [ \omega(0) \quad \omega(1) \quad\dots \quad \omega(T-1)],\label{Eq_STd}
	\end{align}
\end{subequations}
where $\mathbb{W}$ is unknown and cannot be directly measured. Since $\Xright$ and $\mathbb{X}$ are recursively affected by $\mathbb{W}$, it is clear that the data is inherently \emph{noisy}.
Given that the unknown disturbance is bounded, we impose the following bound on the instantaneous \emph{weighted} norm of $\omega$:
\begin{equation}\label{Eq_noise0}
	\Vert\Upsilon\omega\Vert_{{2}} \leq \epsilon_\omega ,
\end{equation}
where $\epsilon_\omega \in \mathbb{R}_{>0}$ is a sufficiently small constant, and $\Upsilon \in \mathbb{R}^{\hat{n} \times n}$ is a full-column-rank weight matrix.

\begin{remark} [\textbf{On Weighted Norm}]\label{Re_weight}
		In general, \eqref{Eq_noise0} represents a \emph{weighted} norm of $\omega$. In the special case where $\Upsilon = \mathds{I}^{n \times n}$, this constraint simplifies to an upper bound on the Euclidean norm of $\omega$. However, in many cases, different components of the system state have varying natures and ranges, making $\Vert\omega\Vert_2$ an inadequate criterion for assessing the components of disturbance. In such cases, by incorporating $\Upsilon$, one obtains linear combinations of the rows of $\omega$, corresponding to different state variables. 
		This enables a more meaningful evaluation of disturbance ranges.
\end{remark}
By defining $\Phi =  \Upsilon^\top \Upsilon$, one can rewrite \eqref{Eq_noise0} as 
\begin{equation}\label{Eq_noise_S}
	\omega^\top \Phi \omega \leq \epsilon_\omega^2.
\end{equation}
Since $\Upsilon$ has full column rank, the matrix $\Phi$ is positive definite and thus invertible. Consequently, the Schur complement \cite{zhang2006schur} can be applied to rewrite \eqref{Eq_noise_S} as
\begin{equation}\label{Eq_noise2}
	\quad \omega\omega^\top  \preceq \epsilon^2_\omega\Phi^{-1}.
\end{equation}
By applying \eqref{Eq_noise2} to the collected data in \eqref{Eq_ST}, one can obtain additional insights into the system matrices $A$ and $B$. In particular, given the availability of an extended dictionary for $\mathcal{M}(x)$ and $\mathcal{Q}(x)$, the following trajectories can be extracted based on \eqref{Eq_STa} and \eqref{Eq_STb}:
\begin{subequations}\label{Eq_mandq}
	\begin{align} 
		\mathbb{M} &\!=\! [ \mathcal{M}(x(0))\,\,\,\mathcal{M}(x(1))\,\,\,\dots\,\,\,\mathcal{M}(x(T\!-\!1))], \\
		{\mathbb{Q}} &\!=\! [ \mathcal{Q}(x(0))u(0)\,\,\,\mathcal{Q}(x(1))u(1) \,\,\,\dots\,\,\,\mathcal{Q}(x(T\!-\!1))u(T\!-\!1))].
	\end{align}
\end{subequations}
Therefore, for $j= 1,\dots,T,$ one has
\begin{align}
	&\Xright_{j}  \!=\! A\mathbb{M}_{j}\!+\!B{\mathbb{Q}}_{j} \!+\!\mathbb{W}_{j}  \!=\! \Omega  \mathbb{Y}_{j} \!+\!\mathbb{W}_{j} , 
	\, 
\end{align}
where  $\Omega = [A\quad B]$, and $\mathbb{Y}_{j}  \!=\! \left[\begin{array}{c} \mathbb{M}_{j} \\
	{\mathbb{Q}}_{j}  \end{array}\right]\!\!. $
	
Accordingly, $\mathbb{W}_{j}   \!=\! \Xright_{j}  \!-\! \Omega  \mathbb{Y}_{j} $. Then, constraint \eqref{Eq_noise2} implies that
\begin{align}
	\epsilon_{\omega}^2\Phi^{-1} \!\succeq& \,\mathbb{W}_{j} \mathbb{W}_{j} ^\top  \nonumber = (\Xright_{ j}  - \Omega  \mathbb{Y}_{ j} )(\Xright_{ j}  - \Omega  \mathbb{Y}_{ j} )^\top   \nonumber\\
	=&\, \Omega  \mathbb{Y}_{ j}  \mathbb{Y}_{ j} ^\top \Omega^\top \!\!-\! \Omega  \mathbb{Y}_{ j} \Xright_{j}^\top \!\!-\!\Xright_{j}  \mathbb{Y}_{j} ^\top  \Omega^\top \!\!+\! \Xright_{j}  \Xright_{j}^\top, \label{Eq_DC}
\end{align}
which yields $T$ matrix inequalities for $\Omega$.

In the following subsection, we introduce a physics-guided set based on approximate models grounded in \emph{fundamental physical principles}, aiming to mitigate the reliance on extensive data.

\subsection{Physics-Guided (PG) Set}

While $A$ and $B$ are unknown in real-world scenarios, the \emph{fundamental physical laws} allow  in many cases for the extraction of \emph{inaccurate yet sufficiently close} nominal matrices, $\tilde{A}$ and $\tilde{B}$, which satisfy the following weighted norm condition:
\begin{gather}\label{Eq_PI0}
	\Vert \Upsilon(\Omega - \tilde\Omega)\Vert_{{2}} \leq \epsilon_\Omega, \\
	\text{with} ~\Omega = [A\quad B], \quad  \tilde\Omega = [ \tilde A\quad  \tilde B],  \nonumber
\end{gather}
for a constant $\epsilon_\Omega \in \mathbb{R}_{>0}$, and the weight matrix $\Upsilon$ as in \eqref{Eq_noise0}. By expanding \eqref{Eq_PI0} and since $\Phi =  \Upsilon^\top \Upsilon$, one has
\begin{gather}\label{Eq_PI_S}
	(\Omega -  \tilde\Omega)^\top \Phi (\Omega - \tilde\Omega) \preceq \epsilon_\Omega^2\mathds{I}_{(m+q)}.
\end{gather}
According to Schur complement \cite{zhang2006schur}, one can reformulate \eqref{Eq_PI_S} as
\begin{align}\label{Eq_PI}
	(\Omega - \tilde\Omega)  (\Omega -  \tilde\Omega)^\top \preceq \epsilon_\Omega^2\Phi^{-1}.
\end{align}
Therefore, alongside the structural physical insights used to derive the extended dictionaries for $\mathcal{M}(x)$ and $\mathcal{Q}(x)$ (as discussed in Remark \ref{Remark: dictionary}), the underlying physical laws provide the crucial parametric constraint in \eqref{Eq_PI}. It is important to note that, in Section~\ref{Sec_Case_studies}, the term ``purely data-driven method'' specifically refers to the synthesis framework without the physics-guided quadratic constraint \eqref{Eq_PI}, while still assuming the availability of the extended dictionaries, consistent with the standard setting in the data-driven literature.

\begin{remark}[\textbf{On Weight $\Upsilon$}]
	~We use the same weight matrix $\Upsilon$ from \eqref{Eq_noise0} in \eqref{Eq_PI0} to linearly transform different rows of $\Omega-\tilde{\Omega}$ associated with different state components. As discussed in Remark \ref{Re_weight}, $\Upsilon$ serves to normalize these components by homogenizing their ranges. Therefore, using identical weights in both \eqref{Eq_noise0} and \eqref{Eq_PI0} is more appropriate.
\end{remark}

\subsection{Extraction of DC and PG Sets}\label{Subsec_PIex}

After introducing the data-driven and physics-guided sets, we now turn to the practical question of how these sets can be extracted. For instance, in the case of a rotating rigid spacecraft, considered as our second case study, the discretized Euler equations are derived from fundamental physical principles, including Newton’s second law for rotational motion and the conservation of angular momentum, and are given by:

\begin{align}\label{Spacecraft}
	\tilde{\Sigma}\!:\!\begin{cases}
		x_1^+ = x_1 + \frac{0.02}{J_1} ((J_2-J_3)x_2x_3 + u_1), \\
		x_2^+ = x_2 + \frac{0.02}{J_2} ((J_3-J_1)x_3x_1 + u_2), \\
		x_3^+ = x_3 + \frac{0.02}{J_3} ((J_1-J_2)x_1x_2 + u_3),
	\end{cases}
\end{align}
where $x^+ \coloneq x(k + 1), \; k \in \mathbb{N}$. In addition, $x_1$ to $x_3$ represent the angular velocity components along the principal axes, $u_1$ to $u_3$ are the torque inputs, and $J_1$ to $J_3$ are the principal moments of inertia. Given this model, the degree of monomials in $\mathcal{M}(x)$ and $\mathcal{Q}(x)$ could be considered $2$ and $0$, respectively. However, one might consider more conservative upper bounds, such as $3$ and $1$, to ensure that all nonlinear terms in the actual system are accounted for in our analysis.

Furthermore, although the physical model yields approximate matrices $\tilde{A}$ and $\tilde{B}$, the actual system dynamics may deviate due to various sources of uncertainties including measurement inaccuracies in $J_1$ to $J_3$. Based on prior knowledge, an upper bound $\epsilon_\Omega$ on the spectral norm of the difference between the true matrices $A, B$ and their nominal counterparts can also be estimated. Additionally, to account for uncertainties arising from modeling inaccuracies, we include a disturbance term $\omega$ as in \eqref{Eq_Llike}, where the upper bound in \eqref{Eq_noise0} limits this source of uncertainty.

Having introduced the data-conformity and  physics-guided sets, we now proceed to propose our physics-guided data-driven framework for discrete-time systems in the following section.

\section{Physics-Guided Data-Driven Design of R-CBC and R-SC}\label{R-CBC_Design}
This section details the core methodology for jointly synthesizing the R-CBC and its R-SC. We establish the theoretical conditions, outline the algorithmic procedure, and analyze both computational complexity and problem feasibility.
\subsection{Synthesis Theorem and Algorithmic Design}
Here, we first specify our R-CBC and its controller as
\begin{align}\label{controller}
	\mathcal{B}(x) = x^\top  P x, \quad u = K(x)x, 
\end{align}
where $P\succ 0$. Note that $K(x)$ is not restricted to the same monomials as the system dynamics and may contain all the monomials up to a certain degree. By doing so, one can simplify the closed-loop form of system \eqref{Eq_Llike} as follows:
\begin{align}\notag
	x^+ =\,\,\,\,\,\,&A \mathcal{M}(x) + B{\mathcal{Q}}(x)u + \omega\\\notag
	\stackrel{\eqref{transform}, \eqref{controller}}{=} \, & \, (A \mathcal{C}(x) + B{\mathcal{Q}}(x)K(x))x + \omega \\\label{Eq_CLs}
	=\,\,\,\,\,&  \Omega Z(x)x + \omega, ~ \text{with} ~ Z(x) =  \left[\begin{array}{c} \mathcal{C}(x) \\
		\mathcal{Q}(x)K(x)\end{array}\right]\!\!.
\end{align}

Before presenting the main result of this section, we first introduce the following lemma, which is essential for showing the subsequent theorem.

\begin{lemma}[\textbf{Bounding $\omega$}]\label{Lemma 1}
	Consider a positive-definite matrix $P$ and 
	positive constants $\mu, \delta \in \mathbb{R}_{>0}$. Assuming that $\omega$ satisfies \eqref{Eq_noise2}, one has
	\begin{equation}\label{Eq_delta_cons}
		P^{-1} \succeq \delta^{-1}(1+\mu^{-1})\epsilon_\omega^2 \Phi^{-1} {\Rightarrow}  (1+\mu^{-1})\omega^\top P \omega \preceq \delta.
	\end{equation}
\end{lemma}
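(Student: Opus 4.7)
The strategy is to chain the matrix inequality \eqref{Eq_noise2} together with the LMI on the left of \eqref{Eq_delta_cons} into the scalar bound on the right, and conversely, relying only on positive-definite algebra and the rank-one structure of $\omega\omega^\top$.

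\emph{Forward implication.} Starting from $P^{-1}\succeq \delta^{-1}(1+\mu^{-1})\epsilon_\omega^2\Phi^{-1}$, I multiply both sides by $\delta$ and combine with $\omega\omega^\top\preceq\epsilon_\omega^2\Phi^{-1}$ from \eqref{Eq_noise2} to arrive at $(1+\mu^{-1})\,\omega\omega^\top\preceq \delta P^{-1}$. A congruence by $P^{1/2}$ then gives $(1+\mu^{-1})\,P^{1/2}\omega\omega^\top P^{1/2}\preceq \delta\,\mathds{I}$. Since the left-hand side is a rank-at-most-one positive semidefinite matrix whose unique nonzero eigenvalue equals its trace $(1+\mu^{-1})\,\omega^\top P\omega$, bounding its maximum eigenvalue directly yields $(1+\mu^{-1})\,\omega^\top P\omega\leq \delta$, as desired.

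\emph{Reverse implication.} Here I read the quadratic bound as required for every admissible $\omega$. Substituting $v=\Phi^{1/2}\omega$ recasts the disturbance constraint $\omega^\top\Phi\omega\leq \epsilon_\omega^2$ as $|v|^2\leq \epsilon_\omega^2$, and recasts the target inequality as $(1+\mu^{-1})\,v^\top\Phi^{-1/2}P\Phi^{-1/2}v\leq \delta$. Choosing $v$ in the direction of the dominant eigenvector of $\Phi^{-1/2}P\Phi^{-1/2}$ and saturating $|v|^2=\epsilon_\omega^2$ forces $(1+\mu^{-1})\,\epsilon_\omega^2\,\lambda_{\max}(\Phi^{-1/2}P\Phi^{-1/2})\leq \delta$. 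Rearranging this scalar inequality into matrix form, then congruence-transforming by $\Phi^{1/2}$ and using the order-reversing inverse of positive definite matrices, recovers the LMI $P^{-1}\succeq \delta^{-1}(1+\mu^{-1})\epsilon_\omega^2\Phi^{-1}$.

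\emph{Main obstacle.} The individual steps are routine. The only genuine care is (i) recognizing that the right-hand inequality in \eqref{Eq_delta_cons} must be interpreted as quantified universally over all $\omega$ satisfying \eqref{Eq_noise2}, since otherwise an equivalence with an LMI independent of $\omega$ would be meaningless; and (ii) exploiting the rank-one structure of $\omega\omega^\top$ in the forward step so that the conclusion follows from the maximum eigenvalue rather than the trace, which would otherwise introduce a spurious factor of the state dimension.
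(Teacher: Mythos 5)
Your proof is correct, and it is both a different route and strictly more complete than the paper's. For the forward implication the paper simply chains the LMI with \eqref{Eq_noise2} to get $P^{-1}\succeq\delta^{-1}(1+\mu^{-1})\omega\omega^\top$ and then invokes the Schur complement to land on the scalar bound; you instead perform a congruence by $P^{1/2}$ and read off the unique nonzero eigenvalue of the rank-one matrix $(1+\mu^{-1})P^{1/2}\omega\omega^\top P^{1/2}$. The two are interchangeable in effect, though the Schur-complement phrasing is the more standard one-liner. Where your proposal genuinely adds value is the reverse implication: the paper's proof establishes only the forward direction, since the step that replaces $\epsilon_\omega^2\Phi^{-1}$ by $\omega\omega^\top$ is not reversible for a single fixed $\omega$. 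Your reading of the right-hand inequality as universally quantified over all admissible $\omega$, followed by the saturation argument $\sup\{\omega^\top P\omega:\omega^\top\Phi\omega\le\epsilon_\omega^2\}=\epsilon_\omega^2\,\lambda_{\max}(\Phi^{-1/2}P\Phi^{-1/2})$ and the order-reversing inversion, is exactly what is needed to justify the ``$\Leftrightarrow$'' claimed in the lemma statement, and it is the correct interpretation given how the lemma is deployed in Theorem~\ref{T_final} (where \eqref{Eq_noise_theorem} must cover every disturbance realization). In short: same forward idea in different clothing, plus a reverse direction the paper asserts but does not actually prove.
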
\vspace{0.2cm}

\begin{proof}
	Let $P^{-1} \succeq \delta^{-1}(1+\mu^{-1})\epsilon_\omega^2 \Phi^{-1}$. By applying \eqref{Eq_noise2}, one has
	\begin{align}\label{New98}
		P^{-1} \succeq \delta^{-1}(1+\mu^{-1})\omega\omega^\top\!. 
	\end{align}
	By employing the Schur complement \cite{zhang2006schur}, inequality \eqref{New98} can be reformulated as
	\begin{align}\label{Eq_noise_limit}
		(1+\mu^{-1})\omega^\top P \omega \preceq \delta,
	\end{align}
	which concludes the proof.
\end{proof}

By leveraging Lemma~\ref{Lemma 1}, which establishes a valid upper bound for the disturbance in terms of $\delta$, we are now ready to present the main result of this section, derived via the S-procedure and subsequently relaxed into an SOS optimization program using Positivstellensatz in Subsection~\ref{Subsec: SOS_Relax}.

\begin{theorem}[\textbf{R-CBC and R-SC Design for dt-IANPS}] \label{T_final}
	Consider a dt-IANPS $\Sigma$ as in Definition \ref{Def_1}. Let there exist $\bar\gamma_{\mathbf i}, \bar\gamma_{\mathbf u}  \in \mathbb{R}_{>0}$, with $\bar\gamma_\mathbf i > \bar\gamma_\mathbf u$, $\delta \in \mathbb{R}_{\geq 0}, \lambda \in (0,1)$, matrix $\bar P\succ 0$,  polynomial matrix $\bar K(x)$, and   $\kappa_{j=0,\dots,T} \!\!:\mathbb{R}^n \to  \mathbb{R}_{\geq0}$, such that
	\begin{subequations}\label{Eq_CBC_theorem}
		\begin{align}\label{Eq_noise_theorem}
			&\bar P - \bar\delta (1+\mu^{-1})\epsilon_\omega^2\Phi^{-1} \succeq 0, \\\label{Eq_CBC1_theorem}
			&\bar P - \bar\gamma_{\mathbf i}\nu_\mathbf i\nu_\mathbf i^\top ~\!\succeq 0, \\\label{Eq_CBC2_theorem}
			-&\bar P  + \bar\gamma_{\mathbf u}\nu_\mathbf u\nu_\mathbf u^\top  \succeq   0,\\\notag
			& 
			\begin{bmatrix} \lambda \bar P && 0  && 0\\ * &&  0 && \left[\begin{array}{c} -\mathcal{C}(x)\bar P \\
					-\mathcal{Q}(x)\bar K(x)\end{array}\right] \\
				* &&  * &&  (1+\mu)^{-1} \bar P\end{bmatrix} + \kappa_0(x)\begin{bmatrix}\mathcal N^{PG} & 0  \\ * &  0 \\
			\end{bmatrix} \\\label{Eq_CBC3_theorem} 
			& ~~+ \sum_{j=1}^{T}\kappa_j(x)\begin{bmatrix}\mathcal N^{DC}_{j} & 0  \\ * &  0 \\
			\end{bmatrix}  \succeq 0,   \quad \quad \forall x \in \tilde X,
		\end{align}
	\end{subequations}
	where
	\begin{subequations}
		\begin{align}\label{Eq_CBC1_nuc}
			X_\mathbf i &\subseteq \{x\in \mathbb{R}^n\!\!: xx^\top \preceq\nu_\mathbf i\nu_\mathbf i^\top\, , \nu_\mathbf i \in \mathbb{R}^{{n \times n}} \},\\ \label{Eq_CBC2_nuc}
			X_\mathbf u &\subseteq \{x\in \mathbb{R}^n\!\!: xx^\top\succeq\nu_\mathbf u\nu_\mathbf u^\top\, , \nu_\mathbf u \in \mathbb{R}^{{n \times n}} \},\\\label{Eq_PI_new2}
			\mathcal N^{PG} &=\begin{bmatrix} \tilde\Omega \tilde\Omega^\top  -\epsilon_\Omega^2\Phi^{-1}  & - \tilde\Omega\\
				* & \mathds{I}\end{bmatrix}\!\!,\\\label{Eq_PI_new1}
			\mathcal N^{DC}_{j} &=\begin{bmatrix}
				\Xright_{j}  \Xright_{j}^\top - \epsilon_{\omega}^2\Phi^{-1} &  -\Xright_{j} \mathbb{Y}_{j} ^\top\\
				* & \mathbb{Y}_{j}  \mathbb{Y}_{j} ^\top\end{bmatrix}\!\!,
		\end{align}
	\end{subequations}
	for  some $\mu \in \mathbb{R}_{>0}$. Then, $\mathcal{B}(x) = x^\top P x$, with $P = \bar P^{-1}$, is an R-CBC  for the  dt-IANPS and $u = K(x)x$, with $K(x) =  {\bar K}(x) \bar P^{-1} =  {\bar K}(x)P$, is its corresponding R-SC, with $\gamma_\mathbf i  = \bar\gamma_\mathbf i ^{-1}$, $\gamma_\mathbf u = \bar\gamma_\mathbf u^{-1}$ (where $\gamma_\mathbf i < \gamma_\mathbf u$), and $\delta = \bar\delta^{-1}$.
\end{theorem}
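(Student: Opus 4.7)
The plan is to verify the three R-CBC conditions \eqref{Eq_CBC1}--\eqref{Eq_CBC3} in Definition~\ref{R-CBC def} for $\mathcal{B}_\mathtt{d}(x) = x^\top P x$ with $P = \bar P^{-1}$, under the feedback $u = \bar K_\mathtt{d}(x)\bar P^{-1}x$. The ordering $\gamma_\mathbf i < \gamma_\mathbf u$ required by Definition~\ref{R-CBC def} is immediate from $\bar\gamma_\mathbf i > \bar\gamma_\mathbf u > 0$ by inversion. For the level-set conditions, applying the Schur complement to \eqref{Eq_CBC1_theorem}, i.e., $\bar P \succeq \bar\gamma_\mathbf i \nu_\mathbf i\nu_\mathbf i^\top$, gives $\nu_\mathbf i^\top P \nu_\mathbf i \leq \gamma_\mathbf i$. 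Combined with \eqref{Eq_CBC1_nuc} and the trace identity $x^\top P x = \mathrm{Tr}(Pxx^\top) \leq \mathrm{Tr}(P\nu_\mathbf i\nu_\mathbf i^\top)$, valid for $x \in X_\mathbf i$ since $P\succeq 0$ and $xx^\top \preceq \nu_\mathbf i\nu_\mathbf i^\top$, this yields \eqref{Eq_CBC1}. A symmetric manipulation using \eqref{Eq_CBC2_theorem} and \eqref{Eq_CBC2_nuc} delivers \eqref{Eq_CBC2}.

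For \eqref{Eq_CBC3}, I would substitute the closed-loop form \eqref{Eq_CLs} into $\mathcal{B}_\mathtt{d}(x^+)$ and apply Young's inequality in the $P$-weighted norm with parameter $\mu>0$ to decouple state evolution from the disturbance:
\begin{align*}
\mathcal{B}_\mathtt{d}(x^+) \leq (1+\mu)\, x^\top Z(x)^\top \Omega^\top P\,\Omega Z(x)\,x + (1+\mu^{-1})\,\omega^\top P\omega.
\end{align*}
Condition \eqref{Eq_noise_theorem} is precisely the hypothesis of Lemma~\ref{Lemma 1} after identifying $P = \bar P^{-1}$ and $\delta = \bar\delta^{-1}$, so the second term is bounded by $\delta$. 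What remains is to extract from \eqref{Eq_CBC3_theorem} the matrix inequality $(1+\mu)\mathcal{A}_\mathrm{cl}(x)^\top P\,\mathcal{A}_\mathrm{cl}(x) \preceq \lambda P$, where $\mathcal{A}_\mathrm{cl}(x) := \Omega Z(x)$ is the unknown (square) closed-loop map.

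This extraction is an S-procedure argument built on two direct expansions: $\thetaTildeI^\top \mathcal{N}^{PI}_\mathtt{d}\,\thetaTildeI = (\Omega-\tilde\Omega)(\Omega-\tilde\Omega)^\top - \epsilon_\Omega^2\Phi^{-1} \preceq 0$ for any $\Omega$ satisfying the physics constraint \eqref{Eq_PI}, and analogously $\thetaTildeI^\top\mathcal{N}^{DC}_{\mathtt{d}_j}\,\thetaTildeI \preceq 0$ for any $\Omega$ consistent with the data via \eqref{Eq_DC}. Pre/post-multiplying \eqref{Eq_CBC3_theorem} by the embedding $\Xi = \mathrm{blkdiag}(\thetaTildeI,\,\mathds{I}_n)$ and its transpose, the $\kappa_j$-weighted blocks collapse to nonpositive quantities and can be dropped; using the identity $\mathcal{A}_\mathrm{cl}(x)\bar P = \Omega\bigl[\mathcal{C}(x)\bar P;\,\mathcal{Q}(x)\bar K_\mathtt{d}(x)\bigr]$, what remains is
\begin{align*}
\begin{bmatrix}\lambda\bar P & -\mathcal{A}_\mathrm{cl}(x)\bar P \\ -\bar P\mathcal{A}_\mathrm{cl}(x)^\top & (1+\mu)^{-1}\bar P\end{bmatrix} \succeq 0.
\end{align*}
A Schur complement on this block matrix gives $\lambda\bar P \succeq (1+\mu)\mathcal{A}_\mathrm{cl}\bar P\mathcal{A}_\mathrm{cl}^\top$; a congruence by $P^{1/2}$ (valid because $\mathcal{A}_\mathrm{cl}$ is square, so $\|EE^\top\| = \|E^\top E\|$ with $E = P^{1/2}\mathcal{A}_\mathrm{cl}P^{-1/2}$) converts it into the desired primal form $(1+\mu)\mathcal{A}_\mathrm{cl}^\top P\mathcal{A}_\mathrm{cl} \preceq \lambda P$, which combined with the disturbance bound from Lemma~\ref{Lemma 1} delivers \eqref{Eq_CBC3}.

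I expect the main obstacle to be the bookkeeping of the $3\times 3$ block structure in \eqref{Eq_CBC3_theorem}: the embedding $\Xi$ must simultaneously (i) reduce each $\kappa_j$-block to $\thetaTildeI^\top\mathcal{N}\,\thetaTildeI$, (ii) absorb the $\Omega$-dependence of the off-diagonal entry through $\mathcal{A}_\mathrm{cl}\bar P = \Omega[\mathcal{C}\bar P;\mathcal{Q}\bar K_\mathtt{d}]$ so that $\bar K_\mathtt{d}$ enters linearly, and (iii) leave the $(1+\mu)^{-1}\bar P$ factor untouched in the $(3,3)$ position for the concluding Schur step. Once this block-wise matching is verified, the remaining manipulations are standard Schur-complement and Young-inequality algebra.
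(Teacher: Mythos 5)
Your proposal is correct and follows essentially the same route as the paper's proof: Young's inequality with parameter $\mu$ to split off the disturbance term, Lemma~\ref{Lemma 1} via \eqref{Eq_noise_theorem} to bound it by $\delta$, the S-procedure against the quadratic constraints \eqref{Eq_DC} and \eqref{Eq_PI} encoded by $\mathcal N^{DC}_{\mathtt{d}_j}$ and $\mathcal N^{PI}_{\mathtt{d}}$, and Schur-complement manipulations to pass between the primal and dual forms of the closed-loop contraction inequality. The only (harmless, arguably cleaner) differences are presentational: you run the argument in the verification direction, making the paper's appeal to ``dilation'' explicit through the congruence with $\Xi=\mathrm{blkdiag}\bigl(\bigl[\mathds{I};\Omega^\top\bigr],\mathds{I}_n\bigr)$, and you handle the level-set conditions \eqref{Eq_CBC1}--\eqref{Eq_CBC2} by a trace/eigenvalue monotonicity argument where the paper uses a Schur-complement discussion.
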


\begin{proof}
	We first show that condition \eqref{Eq_CBC3_theorem}  ensures the satisfaction of condition \eqref{Eq_CBC3}. Since $\mathcal{B}(x) = x^\top P x$ and by defining $ \Bxright := \mathcal{B}(A\mathcal{M}(x)+B{\mathcal{Q}}(x)u + \omega),$ we have
	\begin{align} \notag
		\Bxright &\stackrel{\eqref{Eq_CLs}}{=}\,(\Omega Z(x)x + \omega)^\top P(\Omega Z(x)x + \omega)\\\label{Eq_barriernext} 
		& \,\,= J(x)^\top P J(x)+ 2\underbrace{J(x)^\top \sqrt{P}}_{a^\top}\underbrace{\sqrt{P}\omega}_{b} + \omega^\top P \omega,  
	\end{align}
	where  $J(x)= \Omega Z(x)x$. According to the Cauchy-Schwarz inequality, \emph{i.e.,}  $a b \leq \Vert a \Vert_2 \Vert b \Vert_2,$ for any $a^\top, b \in \R^{n}$, followed by
	employing Young's inequality \cite{young1912classes}, \emph{i.e.,} $\Vert a \Vert_2 \Vert b \Vert_2 \leq \frac{\mu}{2} \Vert a \Vert^2_2 + \frac{1}{2\mu} \Vert b \Vert^2_2$, for any $\mu \in \mathbb{R}_{>0}$, one has
	\begin{align} \label{Eq_nextBB}
		\Bxright \leq & (1+\mu)J(x)^\top P J(x) + (1+\mu^{-1})\omega^\top P \omega. 
	\end{align}
	Given the satisfaction of \eqref{Eq_noise_theorem} and according to Lemma \ref{Lemma 1}, one can bound the disturbance in \eqref{Eq_nextBB}, resulting in
	\begin{align} \notag
		\Bxright - \lambda\mathcal{B}(x) \leq  (1+\mu)J(x)^\top P J(x) - \lambda \overbrace{x^\top P x}^{\mathcal{B}(x)}+ \delta.
	\end{align}
	By defining 
	\begin{align}\notag  
		G(x) := (1+\mu)J(x)^\top P J(x) - \lambda x^\top P x,
	\end{align}
	it is clear that if $G(x)\leq 0$, then 
	\begin{align} \notag
		\Bxright - \lambda\mathcal{B}(x) \leq \delta.
	\end{align}
	Therefore, we now focus on satisfying the constraint $G(x)\leq 0$. One can expand $G(x)$ as
	\begin{align*}
		G(x) =\,& (1+\mu)J(x)^\top P J(x) - \lambda x^\top P x \notag \\
		=\,& (1+\mu)(\Omega Z(x)x)^\top P\Omega Z(x)x - \lambda x^\top P x \notag \\
		=\,& x^\top \big((1+\mu)Z(x)^\top \Omega ^\top P\Omega Z(x)-\lambda P\big) x. 
	\end{align*}
	To enforce $G(x)\leq 0$, it is sufficient to satisfy
	\begin{align}\label{new09}
		(1+\mu)Z(x)^\top \Omega ^\top P\Omega Z(x)- \lambda P \preceq 0. 
	\end{align}
	By Schur complement~\cite{zhang2006schur}, this inequality is equivalent to
	\begin{align} 
		(1+\mu)\Omega Z(x) P^{-1}Z(x)^\top \Omega ^\top- \lambda P^{-1} \preceq 0, \notag
	\end{align}
	which could be rewritten in the following quadratic form:
	\begin{align}\label{Eq_CBC3_new}
		\mathcal{H}^{CBC}(\Omega, x):=\thetaTildeI^\top \mathcal N^{CBC}(x) \thetaTildeI \preceq 0,
	\end{align}
	with
	\begin{multline*}
		\mathcal N^{CBC}(x) = \\ \!\!\begin{bmatrix} -\lambda P^{-1} & \!\!0  \notag\\ * & \!\!(1+\mu) \underbrace{\left[\begin{array}{c} \mathcal{C}(x) \notag\\
					\mathcal{Q}(x)K(x)\end{array}\right]}_{Z(x)} P^{-1}\underbrace{\left[\begin{array}{c} \mathcal{C}(x) \notag\\
					\mathcal{Q}(x)K(x)\end{array}\right]^\top}_{Z(x)^\top} \end{bmatrix}\!\!.
	\end{multline*}
	There are two challenges in satisfying \eqref{Eq_CBC3_new}: first, the exact value of the matrix $\Omega$ is unknown, and second, the design variables $P^{-1}$ and $K(x)$ (in $Z(x)P^{-1}Z(x)^\top$) are bilinear. Let us discuss the first challenge and address the second one at a later stage. Although the exact value of $\Omega$ is unknown, our data-conformity and physics-guided constraints provide $T + 1$ quadratic matrix inequalities involving $\Omega$ (\emph{i.e.,} \eqref{Eq_DC} and \eqref{Eq_PI}). Equation \eqref{Eq_DC} can be reformulated as
	\begin{gather} \label{Eq_DC_new}
		\mathcal{H}^{DC}_j(\Omega) :=\thetaTildeI^\top \mathcal N^{DC}_{j} \thetaTildeI \preceq 0,
	\end{gather}
	with $\mathcal N^{DC}_{j}$ as in \eqref{Eq_PI_new1}. Similarly, \eqref{Eq_PI} can be rewritten as
	\begin{align} \label{Eq_PI_new}
		\mathcal{H}^{PG}(\Omega):=\thetaTildeI^\top \mathcal N^{PG} \thetaTildeI \preceq 0, 
	\end{align}
	with $\mathcal N^{PG}$ as in \eqref{Eq_PI_new2}.
	By applying S-procedure \cite{9308978}, to enforce \eqref{Eq_CBC3_new} where \eqref{Eq_DC_new} and \eqref{Eq_PI_new} are fulfilled, it is sufficient to show that there exists $\kappa_{j=0,\dots,T}(x):\mathbb{R}^n \to  \mathbb{R}_{\geq0}$ such that
	\begin{align} \label{Eq_Slemma} 
		& \mathcal N^{CBC}(x)- \kappa_0(x)\mathcal N^{PG}- \sum_{j=1}^{T}\kappa_j(x)\mathcal N^{DC}_j \preceq 0.
	\end{align}
	Let us now address the bilinearity between matrices $P^{-1}$ and $K(x)$ in $Z(x)P^{-1}Z(x)^\top$, as the second issue. To do so, one can use dilation \cite{caverly2019lmi} and show that inequality \eqref{Eq_Slemma} is equivalent to \eqref{Eq_CBC3_theorem}, \emph{i.e.,}
	\begin{align*}
		\eqref{Eq_Slemma}\Leftrightarrow   \eqref{Eq_CBC3_theorem}.
	\end{align*}
	Given the satisfaction of  \eqref{Eq_CBC3_theorem}, it is clear that condition \eqref{Eq_Slemma} is also met. By fixing suitable scalar values for $\mu > 0$ and $\lambda\in (0,1)$, \eqref{Eq_CBC3_theorem} is a linear matrix inequality (LMI) based on the design variables $\bar P, \bar K(x),$ and $\kappa_{j=0,\dots,T}(x)$. 
	
	\noindent 
	As the final step of the proof, we show that satisfying conditions \eqref{Eq_CBC1_theorem} and \eqref{Eq_CBC2_theorem} implies the fulfillment of conditions \eqref{Eq_CBC1} and \eqref{Eq_CBC2}, respectively.
	Since $P\succ 0$ and $\gamma_\mathbf i, \gamma_\mathbf u \in \mathbb{R}_{>0}, $ by applying Schur complement, one can verify that:
	\begin{align*}
		\gamma_\mathbf i - x^\top Px \geq  0~&\Leftrightarrow ~ P^{-1} - \gamma_\mathbf i^{-1}xx^\top \succeq 0,
	\end{align*}
	which is equivalent to \eqref{Eq_CBC1}. Similarly, according to Schur complement, one has 
	\begin{align*}
		\gamma_\mathbf u - x^\top Px >  0~&\Leftrightarrow ~ P^{-1} - \gamma_\mathbf u^{-1}xx^\top \succ 0,
	\end{align*}
	implying that their complements are also equivalent:
	\begin{align}\label{New76}
		\gamma_\mathbf u - x^\top Px \ngtr  0~&\Leftrightarrow ~ P^{-1} - \gamma_\mathbf u^{-1}xx^\top \nsucc 0.
	\end{align}
	It is clear that the left-hand side of \eqref{New76} is equivalent to \eqref{Eq_CBC2}. Therefore, we need to satisfy $ P^{-1} - \gamma_\mathbf u^{-1}xx^\top \nsucc 0$.  Since $\nsucc$ for matrices implies that the matrix is either negative semi-definite or indefinite, and the latter case is challenging to enforce, a conservative condition with ``$\preceq$'' can be used instead. This, combined with \eqref{Eq_CBC1_nuc} and \eqref{Eq_CBC2_nuc}, guarantees that satisfying conditions \eqref{Eq_CBC1_theorem} and \eqref{Eq_CBC2_theorem} implies the fulfillment of conditions \eqref{Eq_CBC1} and \eqref{Eq_CBC2}, which completes the proof.
\end{proof}

\begin{remark}[\textbf{Purely Data-Driven Method}]\label{Re_PDD}
Our framework accommodates scenarios where physical priors are either unavailable or uninformative. If no prior physical information ($\tilde{A}, \tilde{B}$) is available, the physics-guided constraint \eqref{Eq_PI_new} is fundamentally absent, and the term $\kappa_0(x)\begin{bmatrix}\mathcal N^{PG} & 0 \\ * & 0 \end{bmatrix}$ could be explicitly removed from \eqref{Eq_CBC3_theorem} as discussed in Section \ref{Sec_Case_studies} (cf. the provided comparisons in Table~\ref{Tab_results}). Alternatively, if a nominal model is available but its associated error bound $\epsilon_\Omega$ is excessively conservative, the term remains in the formulation, although the PG constraint no longer effectively restricts the data-conformity set. Consequently, the SOS solver naturally drives the multiplier $\kappa_0(x)$ toward zero, practically discarding the constraint without causing infeasibility. Both scenarios degrade the optimization to a purely data-driven approach, inherently requiring more data samples to achieve the same safety guarantee.
\end{remark}

\begin{remark}[\textbf{On Offline Computation of $\nu_\mathbf i, \nu_\mathbf u$}]\label{Re_nunu}
~The bounding matrices $\nu_\mathbf{i}, \nu_\mathbf{u} \in \mathbb{R}^{n \times n}$ in \eqref{Eq_CBC1_nuc} and \eqref{Eq_CBC2_nuc} are parameters derived from the safety specifications prior to solving the SOS optimization program. Assuming $X_\mathbf i$ and $X_\mathbf u$ can be conservatively bounded by spheres of radii $r_\mathbf i, r_\mathbf u \in \mathbb{R}_{>0}$, such that $\Vert x\Vert_2 \leq r_\mathbf i$ for all $x \in X_\mathbf i$ and $\Vert x\Vert_2 \geq r_\mathbf u$ for all $x \in X_\mathbf u$, a straightforward choice is
\begin{align}\label{New54}
	\nu_\mathbf i\nu_\mathbf i^\top = r_\mathbf i^2\mathds{I}_n, \quad\nu_\mathbf u\nu_\mathbf u^\top= r_\mathbf u^2\mathds{I}_n.
\end{align}
For complex geometries, less conservative matrices can be computed offline via standard convex optimization by finding the minimum volume enclosing ellipsoid for $X_\mathbf{i}$ and the maximum-volume ellipsoid strictly disjoint from $X_\mathbf{u}$.
\end{remark}

\begin{remark}[\textbf{On Selecting \(\mu\) and \(\lambda\)}]\label{Re_New}
    ~The  parameter \(\mu\) originates from the Young's inequality \cite{young1912classes}: \[
    2J(x)^\top P\,\omega \leq \mu\,J(x)^\top P\,J(x) + \mu^{-1}\,\omega^\top P\,\omega.
    \]
    Since \(J(x)=x^+-\omega\) and typically \(\Vert x^+\Vert_2\gg \Vert \omega\Vert_2\), one can fix \(\mu\) with a sufficiently small value to balance the terms \(\mu\,J(x)^\top P\,J(x)\) and \(\mu^{-1}\,\omega^\top P\,\omega\), leading to a tighter bound. In addition, the parameter \(\lambda\in(0,1)\) can be initialized with a small value (\emph{e.g.,} 0.1) and incrementally increased using a fixed step size until a valid solution is found.
\end{remark}

With the theoretical conditions established, we now turn to the computational implementation.

\subsection{SOS Relaxation and Tractability} \label{Subsec: SOS_Relax}
While conditions \eqref{Eq_noise_theorem}--\eqref{Eq_CBC2_theorem} are standard Linear Matrix Inequalities (LMIs), the condition in \eqref{Eq_CBC3_theorem} is a polynomial matrix inequality. To render this condition computationally tractable for Algorithm~\ref{alg}, we employ a sum-of-squares (SOS) relaxation via the Positivstellensatz.

Assuming the state space $X$ is a basic semi-algebraic set defined by a vector of known polynomial inequalities $X = \{x \in \mathbb{R}^n \mid \alpha(x) \ge \boldsymbol{0}\}$, we introduce a vector of SOS scalar polynomial multipliers $\Gamma(x)$. We can then relax \eqref{Eq_CBC3_theorem} into the following strictly tractable SOS matrix condition:
\begin{align} \notag
	&\begin{bmatrix} \lambda \bar P & 0  & 0\\ * &  0 & \left[\begin{array}{c} -\mathcal{C}(x)\bar P \\
			-\mathcal{Q}(x)\bar K(x)\end{array}\right] \\
		* &  * &  (1+\mu)^{-1} \bar P\end{bmatrix} + \kappa_0(x)\begin{bmatrix}\mathcal N^{PG} & 0  \\ * &  0 \\
	\end{bmatrix} \\\label{Eq_SOSLag}
	&\,\,\,\,\,\,\,\,\,\,\,\,\,+ \sum_{j=1}^{T}\kappa_j(x)\begin{bmatrix}\mathcal N^{DC}_{j} & 0  \\ * &  0 \\
	\end{bmatrix} - \alpha^\top(x) \Gamma(x)\mathds{I}_{2n+m+q} \in \mathcal{K}_{sos}, 
\end{align}
where $\mathcal{K}_{sos}$ denotes the cone of SOS matrices. Since the components of $\Gamma(x)$ are strictly SOS and $\alpha(x) \ge \boldsymbol{0}$ for all $x \in X$, the subtracted scalar term $\alpha^\top(x)\Gamma(x)$ is guaranteed to be non-negative over $X$. This strictly enforces the original condition \eqref{Eq_CBC3_theorem} over the bounded operating domain, mitigating numerical ill-conditioning that arises from enforcing SOS constraints globally over the Euclidean state space.

We introduce Algorithm \ref{alg}, which outlines the required steps in Theorem \ref{T_final} and its SOS relaxation for the physics-guided data-driven design of R-CBC and its R-SC in the discrete-time setting.

\subsection{Computational Complexity Analysis}
   Here, we briefly discuss the computational complexity of our proposed framework. Since Algorithm~\ref{alg} relies on a sum-of-squares (SOS) optimization program, its scalability is primarily governed by the state-space dimension $n$ and the maximum degree $h$ of the system's polynomial dynamics. Our method introduces decision variables including the $n\times n$ matrix $\bar P=P^{-1}$, the $l\times n$ polynomial matrix $\bar K(x)$ (of degree $\bar h$), $T+1$ scalar polynomial functions $\kappa_j(x)$ (of degree $\hat h$), and the multiplier vector $\Gamma(x)$ (of degree $\tilde h$) corresponding to the domain constraints. The number of coefficients for these polynomials scales combinatorially, given by $\binom{n+\bar h}{\bar h}$, $\binom{n+\hat h}{\hat h}$, and $\binom{n+\tilde h}{\tilde h}$, respectively. Since these polynomial degrees are typically chosen proportional to $h$, the number of coefficients becomes the decisive factor in computational complexity. Consequently, the computational burden grows \emph{polynomially} with $n$ (for a fixed $h$) and with $h$ (for a fixed $n$). Nevertheless, as demonstrated in our simulation results, our method can efficiently manage systems with relatively complex dynamics.

\subsection{Feasibility Analysis}\label{Sec-Feasibility}

Our approach, consistent with existing literature on model-based Lyapunov and barrier functions, only provides \emph{sufficient} conditions for ensuring the safety of nonlinear polynomial systems. Here, we present a feasibility analysis to provide insight into the situations under which a solution is more likely to exist.

In our problem formulation, \eqref{Eq_noise_theorem}--\eqref{Eq_CBC2_theorem} involve $\bar P = P^{-1}$ and scalar variables $\bar \gamma_{\mathbf{i}} =\gamma_{\mathbf{i}}^{-1}$, $\bar \gamma_{\mathbf{u}} =\gamma_{\mathbf{u}}^{-1}$, and $\bar \delta = \delta^{-1}$. To interpret these conditions, consider a simpler case where $\nu_\mathbf{i}\nu_\mathbf{i}^\top = r_\mathbf{i}^2\mathds{I}_n$ and $\nu_\mathbf{u}\nu_\mathbf{u}^\top = r_\mathbf{u}^2\mathds{I}_n$, with $r_\mathbf{i} < r_\mathbf{u}$ being two scalar values. From \eqref{Eq_CBC1_theorem} and \eqref{Eq_CBC2_theorem}, one can derive $\bar\gamma_\mathbf{i} \leq \lambda_{min}(\bar P)/r_\mathbf{i}^2$ and $\bar\gamma_{\mathbf{u}} \geq \lambda_{max}(\bar P)/r_\mathbf{u}^2$. For $\gamma_{\mathbf{i}} < \gamma_{\mathbf{u}}$ (equivalently $\bar\gamma_{\mathbf{i}} > \bar\gamma_{\mathbf{u}}$) to hold, the condition number of $\bar P = P^{-1}$ (\emph{i.e.,} $\frac{\lambda_{\text{max}}(\bar P)}{\lambda_{\text{min}}(\bar P)}$) should be small and close to 1. Similarly, considering conditions \eqref{Eq_noise_theorem} and \eqref{Eq_CBC2_theorem}, a smaller condition number allows a larger ratio of $\dfrac{\gamma_{\mathbf{u}}}{\delta}$, which is desirable according to~\eqref{New9}. Thus, conditions \eqref{Eq_noise_theorem}--\eqref{Eq_CBC2_theorem} primarily constrain certain characteristics of $\bar P = P^{-1}$.

	\begin{algorithm}[t!] 
	\caption{Physics-guided data-driven design of R-CBC and R-SC for dt-IANPS}\label{alg}
	\begin{algorithmic}[1]
		\REQUIRE The state set $X$, bounds for initial and unsafe sets $\nu_\mathbf i, \nu_\mathbf u$ as in \eqref{New54} as part of the safety specification, extended dictionaries $\mathcal{M}(x), \mathcal{Q}(x)$, and $\epsilon_\omega, \epsilon_\Omega, \Upsilon$ as in \eqref{Eq_noise0}  and \eqref{Eq_PI0}
		\STATE Collect $\Xright,\mathbb{X},  \mathbb{U}$ as in~\eqref{Eq_ST}
		\STATE Form $\mathbb{M}, {\mathbb{Q}}$ as in \eqref{Eq_mandq}, and $\Phi$ as $\Upsilon^\top \Upsilon$
		\STATE Initialize $\lambda \in (0,1)$, $\mu \in \mathbb{R}_{>0}$ according to Remark~\ref{Re_New}
		\STATE Find $\bar P$, $\bar K(x)$, $\bar\gamma_\mathbf i$, $\bar\gamma_\mathbf u$ (with $\bar\gamma_\mathbf i >\bar\gamma_\mathbf u$), and $\bar\delta$ that satisfy~\eqref{Eq_noise_theorem}, \eqref{Eq_CBC1_theorem}, \eqref{Eq_CBC2_theorem}  and~\eqref{Eq_SOSLag} using SeDuMi and SOSTOOLS~\cite{prajna2004sostools}
		\STATE Construct $\mathcal{B}(x)=x^\top P x$ using $P = \bar{P}^{-1}$, and $u = K(x)x$, with $K(x) =  {\bar K}(x) \bar P^{-1} =  {\bar K}(x)P$
		\STATE Construct $\gamma_\mathbf i  = \bar\gamma_\mathbf i ^{-1}$, $\gamma_\mathbf u = \bar\gamma_\mathbf u^{-1}$ (where $\gamma_\mathbf i < \gamma_\mathbf u$), and $\delta = \bar\delta^{-1}$
		\STATE Given designed parameters $\lambda, \gamma_\mathbf i, \gamma_\mathbf u$, and $\delta$, check conditions \eqref{New9}, \eqref{new8} and provide safety guarantee  for either \emph{infinite or finite} time horizons
		\ENSURE  R-CBC~$\mathcal{B}(x) = x^\top P x$, R-SC $u\!=\! K(x) x$, and guaranteed robust safety for unknown dt-IANPS
	\end{algorithmic}
\end{algorithm}

\begin{table*}[t!]  
	\makeatletter
	\long\def\@makecaption#1#2{%
		\vskip\abovecaptionskip
		\noindent \textbf{#1.} #2\par
		\vskip\belowcaptionskip}
	\makeatother
	\centering
	\caption{A comparison of the sample sizes required to guarantee safety over an infinite time horizon using our physics-guided method ($T_{\text{PGDD}}$) versus a purely data-driven approach ($T_{\text{DD}}$), where inequality \eqref{Eq_PI} is unavailable.  The reported values for $\gamma_{\mathbf{i}}$, $\gamma_{\mathbf{u}}$, $\delta$, and runtime (\texttt{RT}) correspond to the results from our physics-guided method. For all experiments, $\Phi = \mathds{I}_3$.}\vspace{0.2cm}
	\label{Tab_results}
	{\small
		\begin{tabular}{@{}lcccccccccccc@{}}
			\toprule
			System & $\epsilon_\omega$  & $\epsilon_\Omega$  & $\lambda$ & $\gamma_\mathbf{i}$ & $\gamma_\mathbf{u}$ & $\delta$ & $T_{PGDD}$ & $T_{DD} $ & $\mathcal{T}$ & \texttt{RT} (sec)  \\
			\midrule
			{Lorenz} & $0.001$ &  $0.1$ & $0.99$ & $6.71 \times 10^6$  &  $1.19 \times 10^7$ & $2.79 \times 10^3$ & $2$  &   $13$ &  $\infty$  & $2.21$ \\
			\midrule
			{Spacecraft} & $0.05$ & $0.8$  & $0.99$  & $7.19 \times 10^5$  &  $9.46 \times 10^5$ & $6.02 \times 10^3$  & $15$  &   $31$ &  $\infty$ & $3.43$ \\
			\midrule
			{Higher-Degree}  & $0.0014$ & $0.325$  & $0.99$  & $1.40 \times 10^7$  &  $1.83 \times 10^7$ & $8.95 \times 10^3$ & $13$  &   $35$ &  $\infty$  & $30.35$ \\
			\bottomrule
		\end{tabular}
	}
\end{table*}

Furthermore, restricting the barrier certificate to a quadratic form ($\mathcal{B}(x) = x^\top P x$) inherently constrains the level sets to ellipsoids. While this is a trade-off for SOS tractability and data efficiency, it can induce geometric conservatism; separating highly non-convex initial and unsafe sets with a single ellipsoid may be restrictive and impact LMI feasibility.

The most key constraint, however, is condition \eqref{Eq_CBC3_theorem}. Since the exact matrix $\Omega$ is unknown, safety should be enforced uniformly across all models satisfying the data and physical constraints \eqref{Eq_DC} and \eqref{Eq_PI0}. Consequently, feasibility hinges on sufficiently shrinking this admissible set, since an excessively large set may contain pathological models (\emph{e.g.,} uncontrollable or inherently unsafe dynamics) while simultaneously requiring a single pair of R-CBC and R-SC to remain valid across all models within the set. The shrinkage of the admissible set occurs when (i) the physical prior is highly accurate (yielding a tight $\epsilon_\Omega$), or (ii) the trajectory data is sufficiently informative and long, mitigating the need for strong priors. Conversely, overly conservative bounds ($\epsilon_\omega, \epsilon_\Omega$) or poor data excitation unnecessarily inflate the uncertainty set, potentially risking SOS infeasibility. Finally, the introduction of the multiplier $\Gamma(x)$ in the SOS relaxation \eqref{Eq_SOSLag} further enhances feasibility by confining the strict positivity requirement to the operating domain $X$, thereby avoiding the substantial conservatism associated with enforcing it globally.

\section{Simulation Results}\label{Sec_Case_studies}

This section evaluates the efficacy of our physics-guided data-driven framework through three benchmark case studies: a Lorenz system~\cite{lopez2019synchronization}, serving as a classical chaotic nonlinear polynomial model; a rotating rigid spacecraft~\cite{khalil2002nonlinear}; and a system featuring higher-degree polynomial dynamics. As summarized in Table~\ref{Tab_results}, we systematically compare our approach against its purely data-driven counterpart (cf. Remark~\ref{Re_PDD}) to explicitly demonstrate how integrating prior physical knowledge enables formal safety guarantees using significantly shorter trajectories. The Lorenz system, in particular, is selected because it effectively captures complex, chaotic dynamics prevalent across various scientific domains. Such chaotic models are integral to secure communications for signal encryption~\cite{wang2009chaotic}, atmospheric modeling in weather prediction~\cite{Lorenz1963}, adaptive robotics in unpredictable environments~\cite{sprott2010elegant}, and neuroscience for simulating brain activity to understand disorders like epilepsy~\cite{strogatz2024nonlinear}. All simulations were executed in MATLAB on a macOS device equipped with an M3 Max chip. Furthermore, in all case studies, $\nu_\mathbf{i}$ and $\nu_\mathbf{u}$ are derived based on the spherical assumption (cf. Remark \ref{Re_nunu}).

\subsection{Lorenz System} 
To illustrate the applicability of our approach, we begin with the discrete-time Lorenz system, a well-known nonlinear polynomial chaotic system. The nominal model dynamics are described by
\begin{equation}\label{Eq_Lorenz_nominal}
	\tilde{\Sigma}\!:\!\begin{cases}
		x_1^+=x_1 + 0.02(10x_2 -10x_1),  \\
		x_2^+=x_2 + 0.02(28x_1 - x_2- x_1 x_2+u),\\
		x_3^+=x_3+ 0.02(x_1x_3 - \frac{8}{3}x_3).
	\end{cases}
\end{equation}

Based on this model, we construct the extended dictionary $\mathcal{M}(x)$ by including all monomials of state variables up to degree $2$,  while $\mathcal{Q}(x)$ is considered to be constant due to the presence of a single input component, \emph{i.e.,}
\begin{align}\notag
	\mathcal{M}(x) \!=\! \begin{bmatrix}
		x_1;\! 
		x_2;\! 
		x_3;\! 
		x_1x_2;\! 
		x_2x_3;\! 
		x_1x_3;\! 
		x_1^2;\! 
		x_2^2;\! 
		x_3^2
	\end{bmatrix}\!,~
	\mathcal{Q}(x) \! =\!  1.
\end{align}
From \eqref{Eq_Lorenz_nominal}, we extract the nominal matrices  
	$\tilde{A}$ and  
	$\tilde{B}$, which allows us to rewrite the nominal model as
	\begin{align*}
		\tilde{\Sigma}\!:  x^+ = \tilde{A} \mathcal{M}(x) + \tilde{B}{\mathcal{Q}}(x)u.
	\end{align*}
However, as discussed earlier, this nominal model does not fully capture the true system dynamics. The actual behavior is influenced by the unknown matrix $\Omega$ and additive disturbance $\omega$, leading to the accurate unknown dynamics
\begin{align} \label{Eq_real_behaviour}
	\Sigma\!:\! x^+ = \Omega \begin{bmatrix}
		\mathcal{M}(x) \\
		\mathcal{Q}(x)u
	\end{bmatrix} + \omega.
\end{align}
In our simulation, the unknown matrix $\Omega$ is generated by perturbing each element of the nominal matrix $\tilde\Omega = [ \tilde A\quad  \tilde B]$ with a random number drawn from the interval $[-0.0025, 0.0025]$. Additionally, each component of the disturbance vector $\omega$ is randomly sampled at every time step from $[-0.004, 0.004]$.
The chosen values for $\epsilon_\Omega$ and $\epsilon_\omega$ are sufficiently large to accommodate these perturbations.

The regions of interest are given as $X = [-15,15]^3$, $X_\mathbf i = [0,2] \times [-2,2]^2$, and $X_\mathbf u = ([-15,-6]^2 \times [6,15]) \cup ([-15, 15]\times[10, 15]\times[-15, 15])$. Within this setup, we restrict the polynomial degree of $\bar{K}(x)$ to $1$ (resulting in a control input $u$ of degree 2), and all coefficients $\kappa_j(x)$ to $2$. We also set $\mu = 0.002$ and execute Algorithm~\ref{alg} to synthesize the R-CBC and its robust controller.

We design the matrix $P$ and controller $u$ in our physics-guided setting as 
\begin{align}\notag
	P &= 10^5 \times \begin{bmatrix}
		2.554 & 0.612 & -1.255 \\
		0.612 & 1.615 & -1.140 \\
		-1.255 & -1.140 & 4.565 \\
	\end{bmatrix}\!\!,\\\notag
	u &=   0.189x_1^2 + 1.014x_1x_2 + 1.417x_1x_3 + 0.334x_2^2 \\\notag
	&~~~+ 0.220
	x_2x_3 + 0.052x_3^2 - 46.329x_1 - 52.205x_2 \\\label{kdj}
	&~~~+ 25.766x_3.
\end{align}	

As shown in Table \ref{Tab_results}, incorporating the physics-guided quadratic constraint allowed us to achieve an \emph{infinite} time horizon safety guarantee with only $2$ data samples (\emph{i.e.,} $T = 2$). In contrast, the purely data-driven case required at least \emph{$13$} data samples ($T = 13$) to ensure the same guarantee.

With the robust safety controller in place, all trajectories of the Lorenz system remain within the safe set for an infinite time horizon, as shown in Figure \ref{fig:b1}, aligning with our theoretical results in Theorem \ref{Lemma imp}.

\subsection{Rotating Rigid Spacecraft}
As our second case study, we investigate the dynamics of a rotating rigid spacecraft as presented in~\eqref{Spacecraft}, where $J_1 = 0.5$, $J_2 = 1$, and $J_3 = 1.3$.
The system dictionary $\mathcal{M}(x)$ includes monomials up to degree 2, and $\mathcal{Q}(x)$ is considered constant:
\begin{align}\notag
	\mathcal{M}(x) \!=\! \begin{bmatrix}
		x_1;\! 
		x_2;\! 
		x_3;\! 
		x_1x_2;\! 
		x_2x_3;\! 
		x_1x_3;\! 
		x_1^2;\! 
		x_2^2;\! 
		x_3^2
	\end{bmatrix}\!,\quad 
	\mathcal{Q}(x) \! =\!  \mathds{I}_{3}.
\end{align}
Due to various sources of uncertainty, the physical model is inaccurate and the real behavior of the system is determined by the unknown matrix $\Omega$ and the disturbance $\omega$ as in \eqref{Eq_real_behaviour}.  The accurate matrix $\Omega$ is simulated by adding random perturbations within $[-0.001, 0.001]$ to each entry of the nominal matrix $\tilde{\Omega}$. Additionally, each component of $\omega$ is sampled uniformly from $[-0.02, 0.02]$ at each time step. The amounts for $\epsilon_\Omega$ and $\epsilon_\omega$ are chosen large enough to capture these perturbations. 

The sets of interest are given as $X = [-25,25]^3$, $X_\mathbf i = [-5,5]^3$, and $X_\mathbf u = ([-25, -15]\times[0, 25]\times[-25, 25]) \cup [10,25]^3  \cup ([10, 25]\times[-25, -10]^2)$. We set $\mu=0.004$, the maximum degree of $\bar{K}(x)$ to $1$, and the maximum degree of each $\kappa_j(x)$ to $2$. 
The matrix $P$  and controller components $u_i$ in our physics-guided scheme are designed as
\begin{align}\notag
	P &= 10^3 \times \begin{bmatrix}
		4.923 & -0.046 & 0.006 \\
		-0.046 & 9.305 & 1.204 \\
		0.006 & 1.204 & 4.487 \\
	\end{bmatrix}\!\!,\\\notag
	u_1 &= 0.537x_1^2 + 0.775x_1x_2 + 0.814x_1x_3 + 0.785x_2^2 \\\notag
	&~~~+ 0.486x_2x_3 - 0.759x_3^2 - 17.478x_1 + 0.048x_2 
	\\\notag&~~~- 0.030x_3,\\\notag
	u_2 &= 0.157x_1^2 + 1.042x_1x_2 - 0.772x_1x_3 + 0.850x_2^2 \\\notag
	&~~~+ 4.251
	x_2x_3 + 1.097x_3^2 - 1.318x_1 - 38.637x_2 \\\notag
	&~~~+ 1.435x_3,\\\notag
	u_3 &=   -0.188 x_1^2 + 0.332x_1x_2 - 0.253x_1x_3 + 0.730x_2^2 
	\\\notag&~~~+ 1.273x_2x_3 + 4.770x_3^2 - 0.476x_1 + 1.299x_2 \\\label{newjh}&~~~- 45.065x_3.
\end{align}

As shown in Table~\ref{Tab_results}, incorporating the physics-guided quadratic constraint enables us to guarantee safety over an infinite time horizon using only \emph{$T = 15$} data samples, whereas the purely data-driven approach requires at least \emph{$T = 31$} samples to achieve the same level of guarantee. This highlights the sample efficiency of our physics-guided framework. It is also worth noting that the increased data requirement in this case, compared to the Lorenz benchmark in Table~\ref{Tab_results}, stems from the relatively large values chosen for $\epsilon_\Omega$ and $\epsilon_\omega$, which demand more data to ensure the same level of safety. Figure~\ref{fig:b2} illustrates that the designed controller successfully maintains all trajectories within the safe set.

\subsection{Higher-Degree Polynomial System} \label{Example_HD}
To further assess the capabilities of our method in the discrete-time setting, we extend the spacecraft model in \eqref{Spacecraft} to include degree-3 polynomial dynamics (\emph{i.e.,} $x_1^3, x_2x_3^2, x_1x_2x_3$). To do so, we augment the nominal model for rigid spacecraft body with polynomial terms of degree three,
with $\mathcal{M}(x)$ containing all the $19$ monomials of the components of $x$ up to the degree $3$. The accurate model used to generate samples is simulated by adding random perturbations from $[-0.002,0.002]$ to the entries of the nominal matrix $\tilde{\Omega}$. Additionally, each entry of $\omega$ is sampled uniformly from $[-0.0002, 0.0002]$ at each time step.

\begin{figure}[t]
	\centering
	\includegraphics[width= 0.6\linewidth]{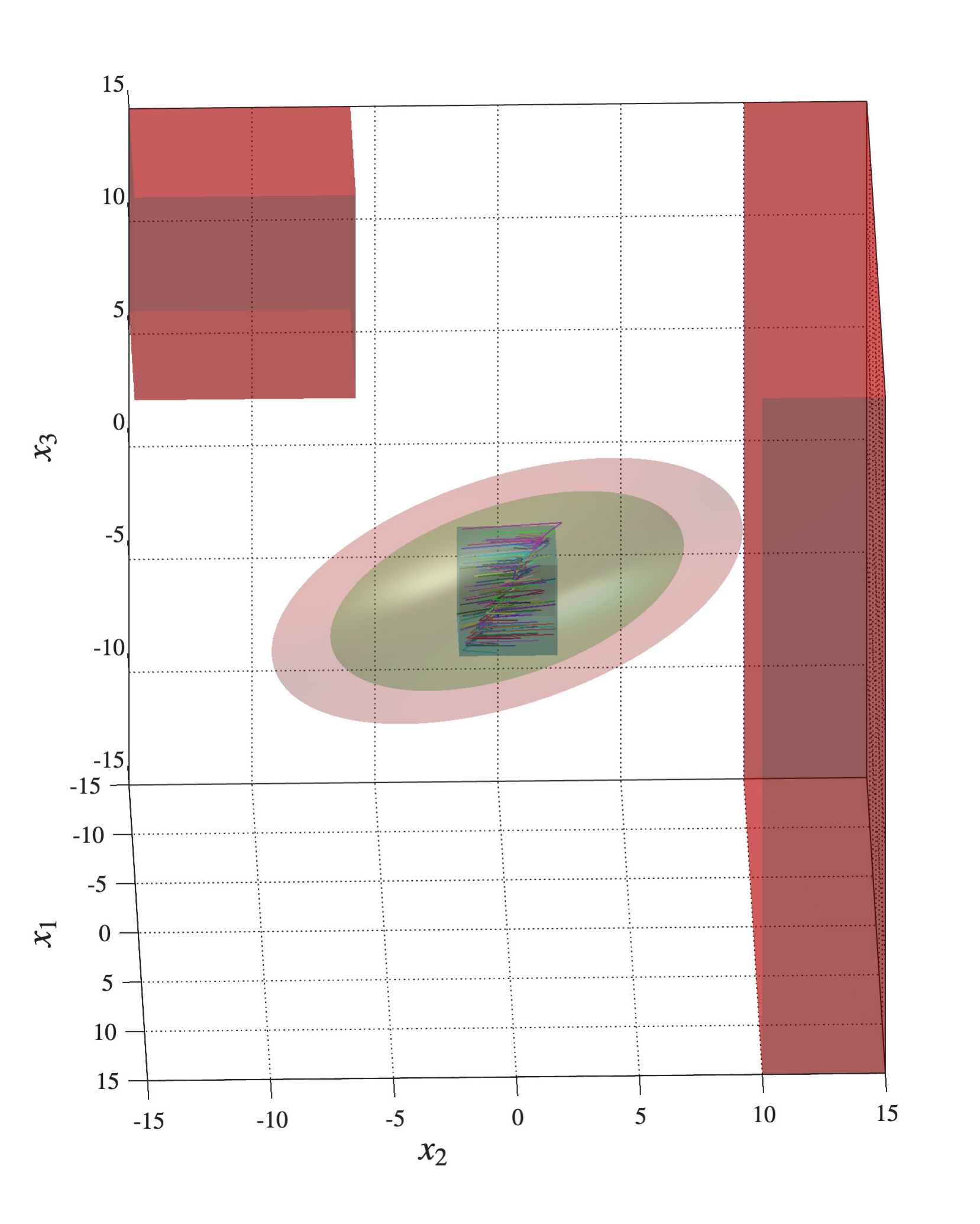}
	\caption{{\bf Lorenz system:} Closed-loop state trajectories of the Lorenz system under the designed controller~\eqref{kdj}, starting from different initial states in $X_\mathbf i \in [0,2] \times [-2, 2]^2$. Initial and unsafe regions are depicted by green \protect\greensquare\ and red \protect\redsquare\ boxes, respectively. The boundaries $\mathcal{B}(x) = \gamma_{\mathbf{i}}$ and $\mathcal{B}(x) = \gamma_{\mathbf{u}}$ are indicated by green and red ellipsoids, respectively. The simulations are generated with $200$ different initial states and disturbances satisfying \eqref{Eq_noise0}, demonstrating the robustness of our framework to disturbances.}
	\label{fig:b1}
\end{figure}

\begin{figure}[t]
	\centering
	\makebox[\linewidth]{\includegraphics[width=0.7\linewidth]{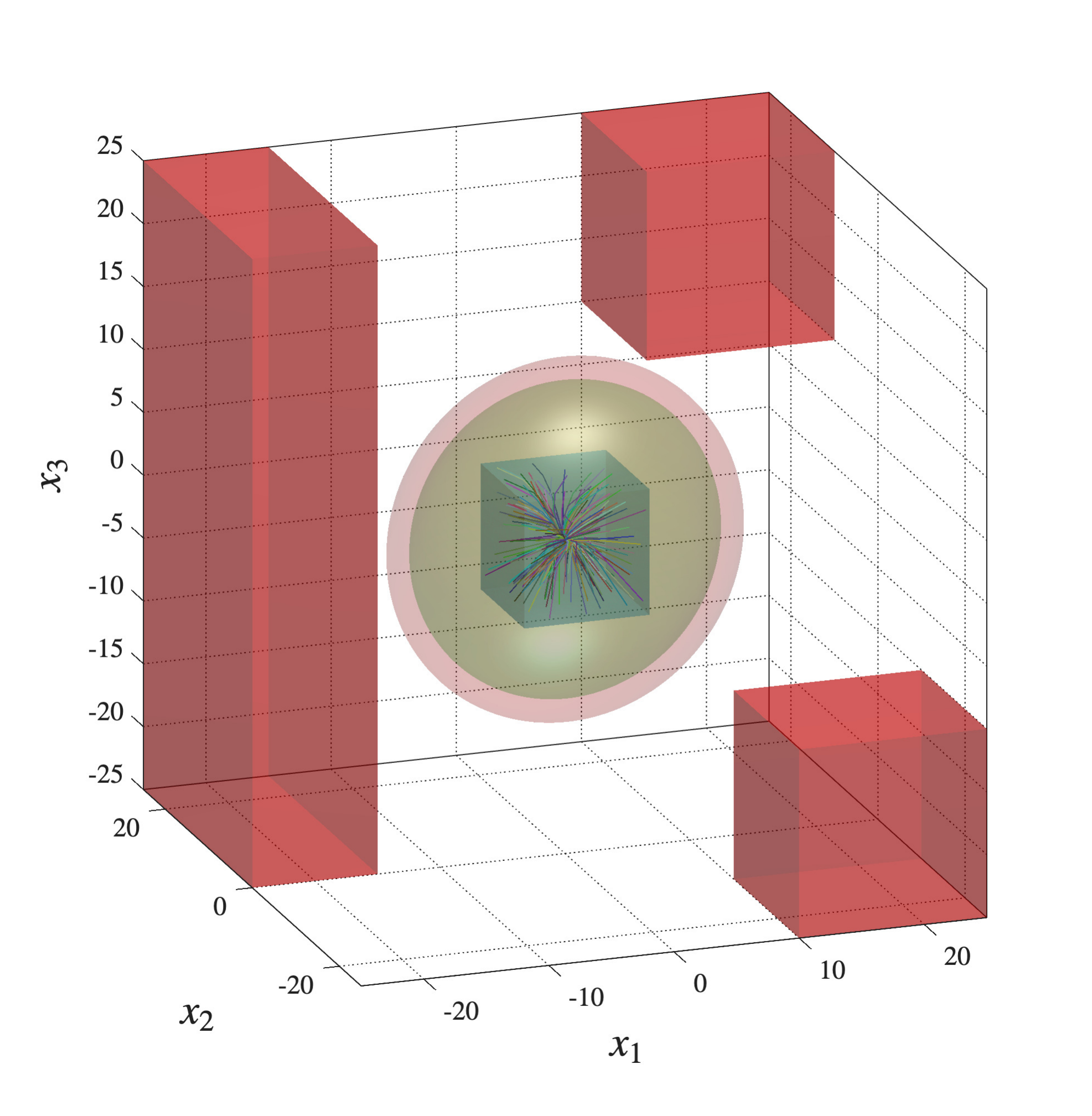}}\vspace{-0.45cm}
	\caption{{\bf Spacecraft system:} Closed-loop state trajectories of the spacecraft system under the designed controller in~\eqref{newjh}. The simulations are performed using $200$ distinct initial conditions, highlighting our framework’s robustness to uncertainty. Initial and unsafe zones are depicted by green \protect\greensquare\ and red \protect\redsquare\ boxes, respectively, while the boundary $\mathcal{B}(x) = \gamma_{\mathbf{i}}$ and $\mathcal{B}(x) = \gamma_{\mathbf{u}}$ are depicted as green and red ellipsoids.}
	\label{fig:b2}
\end{figure}

Regions of interest are given as $X = [-25,25]^3$, $X_\mathbf i = [-5,5]^3$, and $X_\mathbf u = [-25, -12]^3 \cup [12, 25]^3$.
We set the maximum degree for $\bar{K}(x)$ and all $\kappa_j(x)$ to 2, and choose $\mu = 4 \times 10^{-5}$. We design the matrix $P$ in our physics-guided setting as
$$P = 10^4 \times \begin{bmatrix}
	5.841 & -3.140 & 2.205 \\
	-3.140 & 1.286 & -6.098 \\
	2.205 & -6.098 & 9.790 \\
\end{bmatrix}\!\!.
$$
The designed controller components are not reported due to their large size.

With the physics-guided quadratic constraint, an infinite-horizon safety guarantee is achieved using only $T = 13$ data samples. In contrast, the purely data-driven approach requires at least $T = 35$ samples to achieve the same level of safety. Notably, even though degree-3 monomials were included into the system dynamics, resulting in increased nonlinearity, the number of required samples remained comparable to the previous spacecraft case study. This is primarily due to the lower disturbance levels considered here, which reduce uncertainty and allow for more efficient use of data in the safety assurance process. Figure~\ref{fig:b3} shows that the designed controller effectively keeps all trajectories within the safe set.

\begin{figure}[t]
	\centering
	\makebox[\linewidth]{\includegraphics[width=0.75\linewidth]{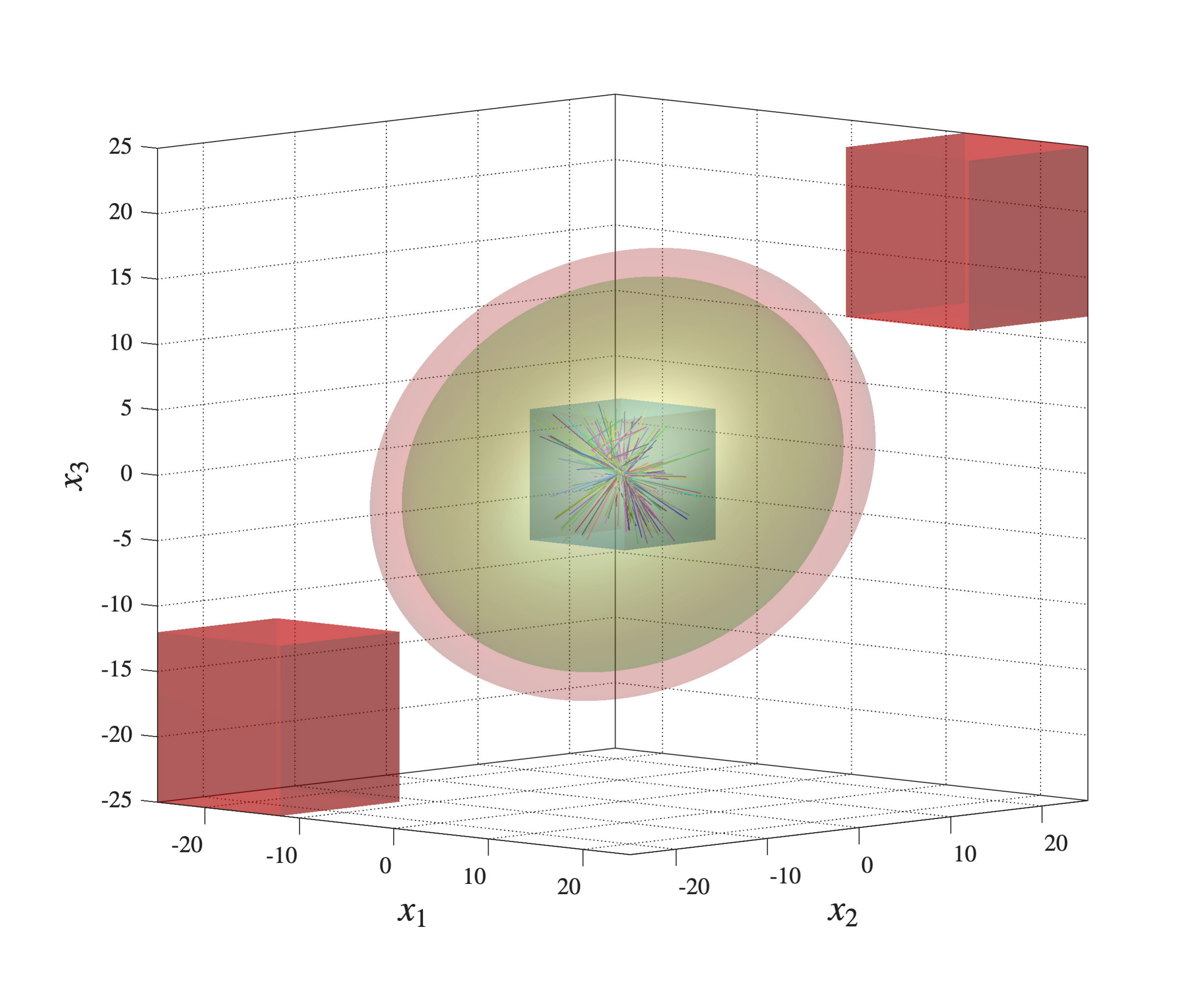}}
	\caption{{\bf Higher-degree polynomial system:} Closed-loop state trajectories of the higher-degree polynomial system under the designed controller. Simulations are performed using $200$ distinct initial conditions, illustrating the framework’s robustness to uncertainties. Initial and unsafe regions are depicted with green \protect\greensquare\ and red \protect\redsquare\ boxes, respectively. The boundaries $\mathcal{B}(x) = \gamma_{\mathbf{i}}$ and $\mathcal{B}(x) = \gamma_{\mathbf{u}}$ are visualized as green and red ellipsoids.}
	\label{fig:b3}
\end{figure}

\section{Conclusion}\label{Sec_Conclusion}

We developed a \emph{physics-guided} data-driven framework for synthesizing robust safety controllers for discrete-time nonlinear polynomial systems under unknown-but-bounded disturbances. Our approach utilized a single input-state trajectory to construct robust control barrier certificates despite \emph{noisy data}, ensuring safety guarantees even with model uncertainty. Unlike conventional trajectory-based methods that require long horizons for safety analysis, our proposed scheme incorporated fundamental physical principles, reducing data dependency with a \emph{shorter} trajectory, while preserving robustness. To achieve this, the proposed synthesis method was cast as an SOS optimization problem that jointly constructs the R-CBC and the corresponding R-SC by integrating observed data with approximate physical models. The effectiveness of the framework was validated across three benchmark examples, demonstrating its capability to ensure robust safety with \emph{significantly lower} data requirements. Promising directions for future work include incorporating \emph{compositional} techniques \cite{akbarzadeh2025formal} to address the scalability limitations of monolithic SOS programs for high-dimensional systems by decomposing the global synthesis problem into smaller, tractable subproblems. Furthermore, extending the framework to \emph{stochastic} control barrier certificates for systems affected by heavy-tailed noise, as well as investigating more expressive composite barrier functions to reduce geometric conservatism, constitute promising directions for future research.

\bibliographystyle{IEEEtran}
\bibliography{biblio}	

\newpage 

\begin{IEEEbiography}[{\includegraphics[width=1in,height=1.3in,clip,keepaspectratio]{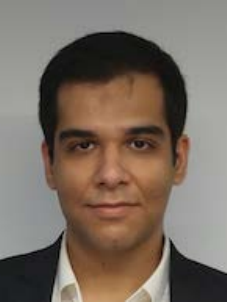}}]{MohammadHossein Ashoori}~(Student Member, IEEE) received his B.Sc. and M.Sc. degrees in electrical engineering from Sharif University of Technology (SUT), Tehran, Iran, in 2019 and 2022, respectively. He is currently pursuing his PhD in
the School of Computing at Newcastle University,
UK.  His research
interests include cyber-physical systems (CPS), computer vision, and digital signal processing.
\end{IEEEbiography}\vspace{-2.7cm}

\begin{IEEEbiography}[{\includegraphics[width=1in,height=1.3in,clip,keepaspectratio]{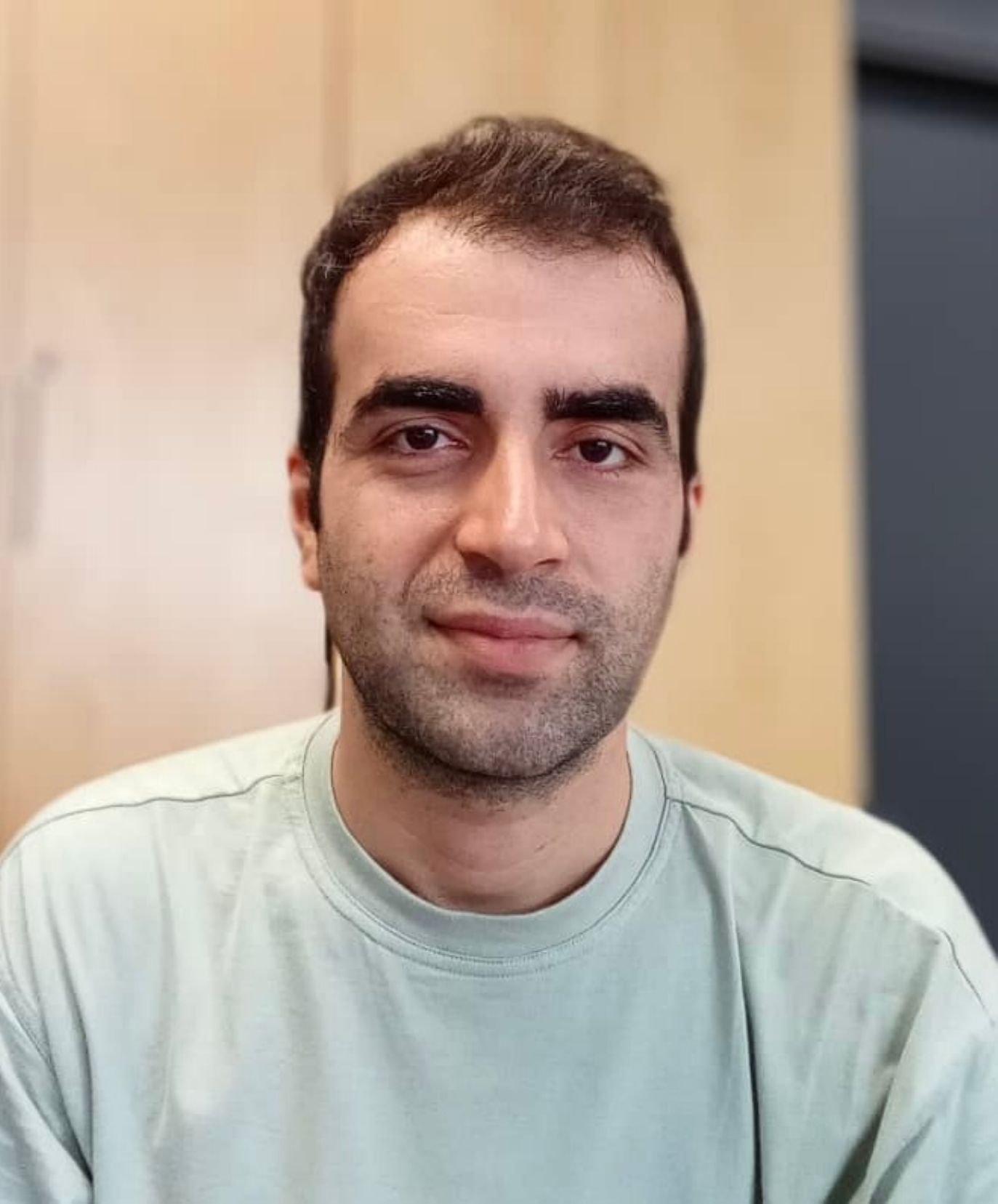}}]{Ali Aminzadeh}~(Member, IEEE) is a Postdoctoral Researcher in the Automation Technology and Mechanical Engineering Unit at the Faculty of Engineering and Natural Sciences, Tampere University, Finland. He received his Ph.D. in Aerospace Engineering from K. N. Toosi University of Technology, Tehran, Iran, in 2023, specializing in Flight Dynamics and Control. He completed his M.Sc. in the same field from the University of Tehran (UT), Iran, in 2016. His research interests focus on cooperative control of multi-agent systems, collective decision-making, data-driven control, and formal verification techniques.
\end{IEEEbiography}\vspace{-2.7cm}

\begin{IEEEbiography}[{\includegraphics[width=1in,height=1.25in,clip,keepaspectratio]{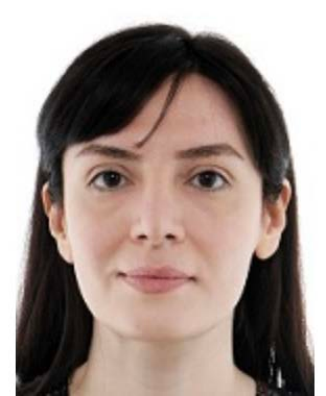}}]{Amy Nejati}~(M'18--SM'25) is an Assistant Professor in the School of Computing at Newcastle University in the United Kingdom. Prior to this, she was a Postdoctoral Associate at the Max Planck Institute for Software Systems in Germany from July 2023 to May 2024. She also served as a Senior Researcher in the Computer Science Department at the Ludwig Maximilian University of Munich (LMU) from November 2022 to June 2023. She received the PhD in Electrical Engineering from the Technical University of Munich (TUM) in 2023. She has received the B.Sc. and M.Sc. degrees both in Electrical Engineering. Her line of research mainly focuses on developing efficient (data-driven) techniques to design and control highly-reliable autonomous systems while providing mathematical guarantees. She was named a CPS Rising Star 2024, and her paper was selected as a Best Repeatability Prize Finalist at ACM HSCC 2025.\end{IEEEbiography}\vspace{-2.7cm}

\begin{IEEEbiography}[{\includegraphics[width=1in,height=1.25in,clip,keepaspectratio]{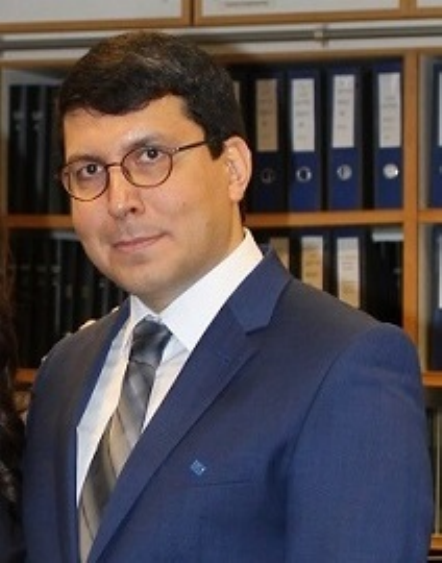}}]{Abolfazl Lavaei}~(M'17--SM'22) is an Assistant Professor in the School of Computing at Newcastle University, United Kingdom. Between January 2021 and July 2022, he was a Postdoctoral Associate in the Institute for Dynamic Systems and Control at ETH Zurich, Switzerland. He was also a Postdoctoral Researcher in the Department of Computer Science at LMU Munich, Germany, between November 2019 and January 2021. He received the Ph.D. degree in Electrical Engineering from the Technical University of Munich (TUM), Germany, in 2019. He obtained the M.Sc. degree in Aerospace Engineering with specialization in Flight Dynamics and Control from the University of Tehran (UT), Iran, in 2014. He is the recipient of several international awards in the acknowledgment of his work including  Best Repeatability Prize (Finalist) at the ACM HSCC 2025, IFAC ADHS 2024, and IFAC ADHS 2021, HSCC Best Demo/Poster Awards 2022 and 2020, IFAC Young Author Award Finalist 2019, and Best Graduate Student Award 2014 at University of Tehran with the full GPA (20/20). His line of research primarily focuses on the intersection of Control Theory, Formal Methods, and Statistical Learning Theory.
\end{IEEEbiography}

\end{document}